\newcommand{\norm}[1]{\left\Vert#1\right\Vert}
\newcommand{\abs}[1]{\left\vert#1\right\vert}
\newcommand{\Set}[1]{\ensuremath{ \left\{ #1 \right\} }}
\newcommand{\set}[1]{\ensuremath{ \{ #1 \} }}
\newcommand{\R}{\mathbb{R}}
\newcommand{\N}{\mathbb{N}}
\DeclareMathOperator*{\esssup}{ess\,sup}
\DeclareMathOperator*{\essinf}{ess\,inf}
\def\e{\mathrm{e}}
\def\ud{d}
\def\dx{dx}
\def\ds{ds}
\def\du{du}
\newcommand{\C}{\mathbb{C}}
\begin{document}
\title{A Fourier Approach to the Computation of CV@R and Optimized Certainty Equivalents}
\author[a,1,s]{Samuel Drapeau}
\author[b,2]{Michael Kupper}
\author[a,3,t]{Antonis Papapantoleon}

\address[a]{Institute of Mathematics, TU Berlin, Stra\ss e des 17. Juni 136, 10623 Berlin, Germany}
\address[b]{University of Konstanz, Universit\"atstra\ss e 10, 78464 Konstanz} 

\eMail[1]{drapeau@math.tu-berlin.de}
\eMail[2]{kupper@uni-konstanz.de}
\eMail[3]{papapan@math.tu-berlin.de}

\myThanks[s]{Financial support: MATHEON project E.11}
\myThanks[t]{Financial support: MATHEON project E.13}



\abstract{
We consider the class of risk measures associated with optimized certainty equivalents.
This class includes several popular examples, such as CV@R and monotone mean-variance.
Numerical schemes are developed for the computation of these risk measures using Fourier transform methods.
This leads, in particular, to a very competitive method for the calculation of CV@R which is comparable in computational time to the calculation of V@R.
We also develop methods for the efficient computation of risk contributions.
}
\keyWords{V@R, CV@R, Optimized Certainty Equivalent, Fourier Methods, Risk Contribution.}

\maketitle
\section*{Introduction}
\addcontentsline{toc}{section}{Introduction}
\markboth{\uppercase{Introduction}}{\uppercase{Introduction}}

The quantification of risk is more than ever a central issue in modern asset and risk management.
The increasing volume and complexity of financial instruments have raised the need not only for \emph{coherent} but also for \emph{efficient} and \emph{accurate} risk measurement methods.
In the banking industry,  a vast amount of positions and portfolios have to be assessed daily, which makes the computational speed of risk measurement methods a matter of paramount importance.
Starting with Value at Risk (V@R), the goal of risk measures was to quantify the minimal amount of capital required in order to recover from unexpected large losses.
V@R became very popular---see also the Basel II capital requirements---and is nowadays a standard instrument in the industry mainly for two reasons.
Firstly, it has an apparently obvious financial interpretation: it is the minimal amount of capital that has to be added to a position in order to push the probability of losses below a threshold level.
Secondly, it has an easy and fast implementation: given a portfolio distribution, it simply amounts to the computation of the quantile of this distribution at the threshold level.
However, V@R has a very serious deficiency; namely, it does not fulfill the basic property of diversification.
Indeed, it may well happen that V@R delivers a lower risk for a portfolio concentrated in a single asset rather than for one diversified into several assets.

In order to overcome this drawback, \citet*{artzner01} introduced an axiomatic approach to coherent risk measures inciting diversification.
An important example of such a risk measure is the Conditional Value at Risk (CV@R), which is strongly related to the Average Value at Risk and the Expected Shortfall.
The seminal paper on coherent risk measures \citep{artzner01} was later generalized to monetary convex risk measures by \citet*{foellmer02} and \citet*{fritelli03} providing new examples, the most prominent of which are the entropic and the shortfall risk measures.
A notable subclass are spectral or law invariant monetary risk measures, which have additional properties that make them particularly attractive for numerical implementations; see e.g. \citet{kusuoka2001, acerbi2002} and \citet*{Jouini_etal_2006}.
An important application of these new risk measurement methods is the portfolio optimization scheme with respect to CV@R developed by \citet{rockafellar00}.

The literature on numerical methods for risk measures however has mostly concentrated on V@R; see \citet{Glasserman2004} for an overview. 
In the area of credit risk, there is more intense activity on computational methods for CV@R and other coherent or convex risk measures, see e.g. \citet{Kalkbrener}.
Moreover, most of this literature concentrates on simulation-based methods, see 
e.g. \citet*{Bardou_Frikha_Pages_2009} and \citet*{Dunkel_Weber_2010} as well as 
the references therein.\footnote{The problem of statistical robustness for law 
invariant risk measures has been raised by \citet{cont2010}. They showed that 
CV@R, in contrast to V@R, is not continuous with respect to the 
weak${}^\ast$-topology. However this topology is very weak, and recently 
\citet{shied2012} showed that a large class of law invariant risk measures is 
statistically robust for a reasonably strong topology. This is also the case for 
the Optimized Certainty Equivalents used here.}
Compared to V@R though, coherent and convex risk measures are typically more difficult to calculate and more costly in terms of computational time.
Taking CV@R as an example, instead of computing the quantile of the distribution at one level, it accounts for an integration of the quantile function over an interval, which increases significantly the computational complexity.

The goal of this paper is to focus on a specific class of law invariant risk 
measures, the \emph{optimized certainty equivalents} which were introduced by 
\citet{ben-tal01,ben-tal02}, and to use \emph{Fourier transform methods} and 
\textit{deterministic root-finding schemes} in order to compute them 
efficiently.
The first reason for choosing this class is that it contains most of the classical examples: CV@R, the entropic risk measure, and monotone Mean-Variance among others.
The second reason is that, due to its nice smoothness properties, it provides a fairly easy scheme for the numerical computation. 
This can be summarized in the following two steps:
\begin{enumerate}
	\item Solve an allocation problem using a one dimensional root finding algorithm and transform methods;
	\item Based on this optimal allocation, compute an expectation using transform methods.
\end{enumerate}

The terminus `transform method' indicates any method that uses the characteristic or moment generating function of a random variable for the computation of expectations.
This includes the Fourier transform method of \citet*{CarrMadan99}, the Laplace transform method of \citet*{Raible00} and the cosine series expansion of \citet*{Fang_Oosterlee_2009}. 
We will use Fourier transfroms and follow the work of \citet*{EberleinGlauPapapantoleon08} closely, while we refer to \citet*{Schmelzle_2010} for a comprehensive overview and numerous references.
Similarly, the term `root finding algorithm' refers to any method for determining the root of a function; e.g. bisection, secant or Newton's method, cf. \citet*{Stoer_Bulirsch_2002} for an overview.
We will actually use Brent's algorithm, which combines the bisection, the secant and the inverse interpolation methods (see \citet*{Brent_1973}) for determining the roots of equations. 

Transform methods have been introduced to mathematical finance for option pricing, see for instance \citep{CarrMadan99,Raible00}, and have proved a very efficient tool when the moment generating function of the underlying random variable is known.
This is the case, in particular, for infinitely divisible distributions (i.e. L\'evy models) and affine processes.
In the context of risk measurement, the application of transform methods has been largely unexplored; see \citet*{Rachevetal} for an application to the computation of CV@R.
Fourier transform methods turn out to be a very efficient tool for the computation of optimized certainty equivalents as well.
In particular, the calculation of CV@R using Fourier methods has similar computational complexity to the computation of V@R, thus both risk measures can be computed in almost the same amount of time.
This should be a further argument supporting the use of CV@R in practical applications.

This paper is organized as follows: in section \ref{sec:setting} we present the optimized certainty equivalents and their connections to risk measures.
The representation of risk contributions in this framework is also discussed.
In section \ref{sec:fourier}, we develop computational methods for optimized certainty equivalents using Fourier methods and deterministic root-finding algorithms.
We concentrate on the study of the entropic risk measure, conditional value at risk and polynomial risk measures.
We also illustrate the scope of applications by presenting some realistic scenarios where this method applies particularly well, and provide examples for the computation of risk contributions.
In the last section, we compare the computational efficiency and accuracy of the developed schemes with respect to several other methods.

\section{Optimized Certainty Equivalent}
\label{sec:setting}

This section is devoted to the class of risk measures associated with optimized 
certainty equivalents. 
They are induced by a (parametric) loss function which reflects the relative 
risk aversion of an agent.
Optimized certainty equivalents generate naturally quasi-convex risk measures. 
Moreover, we also discuss risk contributions in this framework.

Let $(\Omega,\mathscr{F},P)$ be an atomless probability space.
By $L^0$ we denote the set of random variables identified when they coincide $P$-almost surely.
By $L^p$ we denote the set of those random variables in $L^0$ with finite $p$-norm.

\begin{definition}
	A function $l:\R \to \R$ is called a \emph{loss function} if
	\begin{enumerate}[label=(\roman*)]
	    \item $l$ is increasing and convex;
	    \item $l(0)=0$ and $l (x)\geq x$.
	\end{enumerate}
\end{definition}
The loss function is used to measure the expected loss $E[l(-X)]$ of a financial position $X$.
Therefore, relative to the risk neutral evaluation of losses $E[-X]$, the loss function puts more weight on high losses and less on gains; see Figure \ref{fig:loss_functions}.
This is what the second property of $l$ essentially conveys.
As for the first property, it translates the two normative facts related to 
risk, that `\textit{diversification should not increase risk}' and 
`\textit{the better for sure, the less risky}'.
We refer to \citet{drapeau01} for a discussion about these facts.
In our examples, the loss function will additionally satisfy the following, slightly stronger, assumption 
\begin{enumerate}[label=\textbf{(A)}]
    \item\label{assumption} $l(x)> x$ for all $x$ such that $\abs{x}$ is large enough.
\end{enumerate}
This means that we are strictly more averse than the risk neutral evaluation.
\begin{figure}[ht!]
	\centering
	\includegraphics[width=12cm]{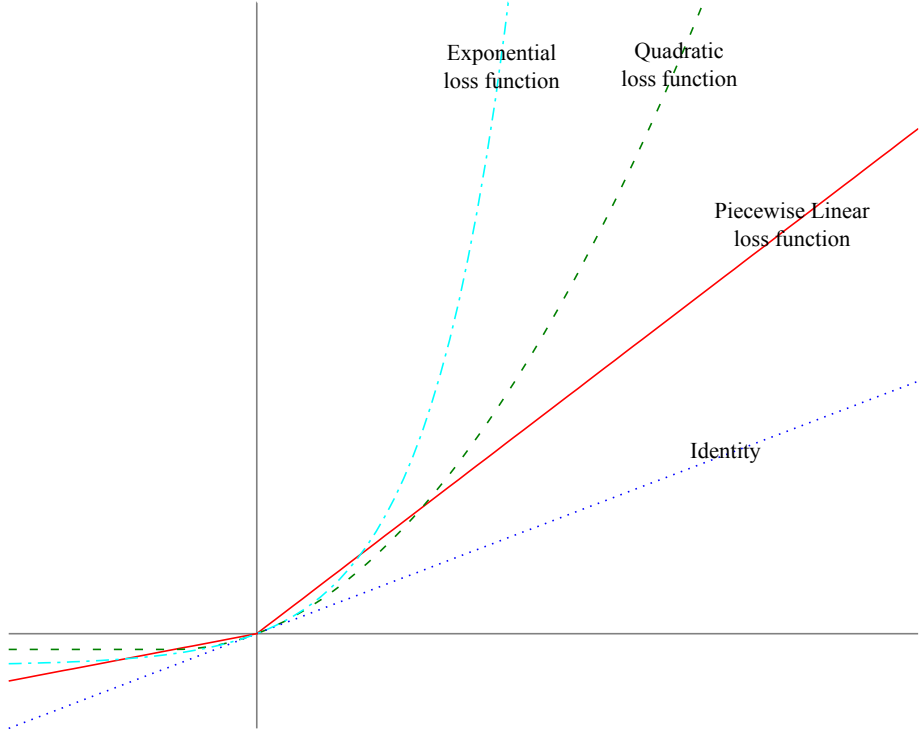}
	\caption{Plot of exponential, quadratic and piecewise linear loss functions (cf. section \ref{sec:fourier}).}
	\label{fig:loss_functions}
\end{figure}

Throughout this paper we will work in the setting of Orlicz spaces which is particularly suitable for optimization in finance and economics, see \citet{kreps01}, \citet{biagini2008} and \citet{cheridito02}.
There are two reasons that motivate this choice in comparison to $L^\infty$.
Firstly, this is the natural setting on which the optimized certainty equivalent is defined and also fits well with Fourier transforms.
Secondly and most importantly, it allows to consider unbounded payoffs which are the rule, rather than the exception, in financial markets.

Let us denote by $l^\ast$ the convex conjugate of $l$, that is,  $l^\ast(y)=\sup_{x\in \R} \set{xy-l(x)}$.
Following \citet{cheridito02,cheridito2008}, we define the Orlicz heart
\begin{equation}
	\mathcal{X}_l:=\Set{X \in L^0: E\left[ l(c\abs{X}) \right] < +\infty\text{ for all } c>0}
	\label{}
\end{equation}
which is, for the $P$-almost sure ordering and the $l$-Luxembourg norm
\begin{equation}
	\norm{X}_{l}:=\inf \Set{ a> 0 : E\left[ l\left( \frac{\abs{X}}{a} \right) \right]\leq 1},
	\label{}
\end{equation}
a Banach lattice.
The norm dual of $\mathcal{X}_l$ is the Orlicz space
\begin{equation}
	\mathcal{X}_{l}^\ast:=\Set{Y \in L^0: E\left[ l^\ast(c\abs{Y}) \right]<+\infty\text{ for some } c>0}
	\label{}
\end{equation}
with the Orlicz norm
\begin{equation}
	\norm{Y}^\ast_{l}:=\sup\Set{E[YX]:\norm{X}_{l}\leq 1},
	\label{}
\end{equation}
which is equivalent to the Luxembourg norm $\norm{\cdot}_{l^\ast}$.
Since $l(x)\geq x$ for all $x \in \R^+$, it follows that $\mathcal{X}_l\subseteq L^1$.

We denote with $\mathcal{M}_{1,l^\ast}(P)$ the set of those probability measures on $\mathscr{F}$ which are absolutely continuous with respect to $P$ and whose densities are in $\mathcal{X}_{l}^\ast$.
We consider risk measures in the following sense.
\begin{definition}
	A \emph{risk measure} is a function $\rho:\mathcal{X}_{l}\to [-\infty,+\infty]$ which is
	\begin{enumerate}[label=(\roman*)]
		\item quasi-convex: $\rho(\lambda X+(1-\lambda)Y)\leq \max\set{\rho(X),\rho(Y)}$ for all $X,Y \in \mathcal{X}_{l}$ and $\lambda \in ]0,1[$;
		\item monotone: $\rho(X)\geq \rho(Y)$ whenever $X\leq Y$ for $X,Y \in \mathcal{X}_{l}$.
	\end{enumerate}
	A risk measure is called \emph{monetary} if it is
	\begin{enumerate}[label=(\roman*)]\setcounter{enumi}{2}
		\item cash additive: $\rho(X+m)=\rho(X)-m$ for all $m \in \R$ and all $X \in \mathcal{X}_{l}$.
	\end{enumerate}
\end{definition}
As is well-known, any monetary risk measure is automatically convex, see \citep{Delbaen00,fritelli03,marinacci05,drapeau01} and the references therein.

Given a loss function $l$, we define the Optimized Certainty Equivalent (OCE) introduced in \citep{ben-tal02,ben-tal01}---to which we refer for further interpretation---as follows
\begin{equation}
	\rho(X):=\inf_{\eta \in \R}\Set{E\left[ l\left( \eta-X \right) \right]-\eta}=\inf_{\eta \in \R}S_{l}\left( \eta,X \right),\quad X \in \mathcal{X}_l,
	\label{}
\end{equation}
whereby 
\begin{equation}
	S_{l}\left( \eta,X \right) :=E\left[ l\left( \eta-X \right)  \right]-\eta, \quad \eta \in \R\text{ and }X \in \mathcal{X}_{l}.
	\label{}
\end{equation}
The following proposition is known up to minor differences in the assumptions. 
See \citep{ben-tal01, fabio2012} for the case $\mathcal{X}_{l}=L^\infty$, \citep{kupper01} for the case where $l$ is differentiable, and \citep{cheridito02} for the computation of the dual representation in the general case.
For the sake of readability, we provide a short proof based on results in \citep{cheridito2008}.
\begin{proposition}\label{prop:OCE}
	Let $l$ be a loss function. 
	Then, the Optimized Certainty Equivalent on $\mathcal{X}_l$ is a lower semicontinuous cash additive risk measure taking values in $\R$.

	If, in addition, $l$ satisfies Assumption \ref{assumption}, then, for any $X \in \mathcal{X}_{l}$, there exists an optimal allocation $\eta^\ast:=\eta^\ast(X) \in \R$ such that
	\begin{equation}
		\rho\left(X\right):=E\left[ l\left( \eta^\ast -X \right) \right]-\eta^\ast
		\label{eq:optialloc}
	\end{equation}
	and this optimal allocation $\eta^\ast$ belongs to $[\essinf X, \esssup X]$ and satisfies
	\begin{equation}
	    E\left[ l^\prime_{-} \left( \eta^\ast -X \right) \right]\leq 1\leq E\left[ l^\prime_{+}\left( \eta^\ast -X \right) \right]
		\label{eq:charopteta}
	\end{equation}
	where $l^\prime_{-}$ and $l^\prime_+$ denote the left- and right-hand derivatives of $l$ respectively.
	Finally, the OCE has the representation
	\begin{equation}
		\rho(X)=\max_{Q \in \mathcal{M}_{1,l^\ast}(P)}\Set{E_{Q}\left[-X\right]-E_{P}\left[l^\ast\left(\frac{dQ}{dP}\right)  \right]}, \quad X \in \mathcal{X}_l.
		\label{eq:robrep}
	\end{equation}
	This supremum is attained for those $Q^\ast \in \mathcal{M}_{1,l^*}(P)$ where the density is such that $l_{-}^\prime(\eta^\ast-X)\leq dQ^\ast/dP\leq l_{+}^\prime\left( \eta^\ast -X \right)$, while $\eta^\ast$ fulfills \eqref{eq:charopteta}.
\end{proposition}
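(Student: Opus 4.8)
The plan is to dispatch the structural properties first, then the existence of $\eta^\ast$, and finally the duality. Cash additivity is immediate from the substitution $\eta\mapsto\eta+m$ in the definition of $\rho(X+m)$, which shifts the objective by $-m$; monotonicity follows because $l$ is increasing, so $X\le Y$ yields $S_l(\eta,X)\ge S_l(\eta,Y)$ pointwise in $\eta$. As $(\eta,X)\mapsto E[l(\eta-X)]-\eta$ is jointly convex, its partial infimum $\rho$ is convex in $X$, hence quasi-convex. For finiteness I would bound $\rho$ above by evaluating at $\eta=0$, so that $\rho(X)\le E[l(-X)]\le E[l(\abs{X})]<+\infty$ since $X\in\mathcal{X}_l$, and below using $l(x)\ge x$ together with $\mathcal{X}_l\subseteq L^1$, giving $S_l(\eta,X)\ge E[\eta-X]-\eta=-E[X]>-\infty$ uniformly in $\eta$; thus $\rho(X)\in\R$.

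For the existence of an optimal allocation I would fix $X$ and analyse the convex function $\eta\mapsto S_l(\eta,X)$, whose one-sided derivatives are $E[l'_{\pm}(\eta-X)]-1$. This is precisely where Assumption \ref{assumption} is needed: I would first argue that $\lim_{x\to+\infty}l'_{+}(x)>1$ and $\lim_{x\to-\infty}l'_{+}(x)<1$, since monotonicity of $l'$ together with $l(0)=0$ would otherwise force $l(x)=x$ on a half-line, contradicting \ref{assumption}. By monotone convergence $E[l'_{+}(\eta-X)]$ then tends to a limit $>1$ as $\eta\to+\infty$ and to a limit $<1$ as $\eta\to-\infty$, so $S_l(\eta,X)\to+\infty$ at both ends and the infimum is attained at some $\eta^\ast$; at a minimizer the subdifferential contains $0$, which is exactly \eqref{eq:charopteta}. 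To confine $\eta^\ast$ to $[\essinf X,\esssup X]$ I would use that $1\in\partial l(0)$ (the line $y=x$ supports $l$ at the origin): for $\eta>\esssup X$ one has $\eta-X>0$ a.s. and hence $l'_{-}(\eta-X)\ge l'_{+}(0)\ge1$, so $S_l$ is nondecreasing there, and symmetrically $S_l$ is nonincreasing for $\eta<\essinf X$.

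For the representation, the inequality $\rho(X)\ge\sup_Q\set{E_Q[-X]-E_P[l^\ast(dQ/dP)]}$ is elementary: writing $Y=dQ/dP$, the Fenchel--Young inequality $l(\eta-X)\ge Y(\eta-X)-l^\ast(Y)$ integrates, using $E[Y]=1$, to $S_l(\eta,X)\ge E_Q[-X]-E_P[l^\ast(Y)]$ for every $\eta$. The reverse inequality, which gives \eqref{eq:robrep} and, being a supremum of norm-continuous affine functionals, also the lower semicontinuity of $\rho$, I would take from the conjugacy theory for convex integral functionals on Orlicz hearts in \citep{cheridito2008}. Under Assumption \ref{assumption} I would in addition exhibit the maximizer by forcing equality in Fenchel--Young at $\eta^\ast$: selecting a measurable $Y^\ast(\omega)\in[l'_{-}(\eta^\ast-X),l'_{+}(\eta^\ast-X)]=\partial l(\eta^\ast-X)$ normalized so that $E[Y^\ast]=1$, which is possible by \eqref{eq:charopteta} and an intermediate-value argument along the convex combinations $(1-t)l'_{-}(\eta^\ast-X)+t\,l'_{+}(\eta^\ast-X)$. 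Since $l'\ge0$ the associated $Q^\ast$ is a probability measure, and the equality $l^\ast(Y^\ast)=Y^\ast(\eta^\ast-X)-l(\eta^\ast-X)$ shows simultaneously that $E_P[l^\ast(Y^\ast)]<+\infty$, so $Q^\ast\in\mathcal{M}_{1,l^\ast}(P)$, and that $\rho(X)=E_{Q^\ast}[-X]-E_P[l^\ast(Y^\ast)]$.

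The step I expect to be most delicate is the coercivity argument: extracting from the mild pointwise hypothesis \ref{assumption} the strict derivative limits and justifying the passage to the limit inside the expectation (via monotone convergence and $X\in\mathcal{X}_l$). The other sensitive point is the measurable selection of the optimal density $Y^\ast$ subject to the normalization $E[Y^\ast]=1$, together with the verification that its conjugate is integrable so that $Q^\ast$ genuinely lies in $\mathcal{M}_{1,l^\ast}(P)$.
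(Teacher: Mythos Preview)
Your argument is correct and structurally parallel to the paper's, but two steps are handled by different devices and it is worth recording the comparison. For coercivity under \ref{assumption}, the paper does not pass through the one-sided derivatives: it observes directly that \ref{assumption} forces two affine minorants $l(x)\ge bx+c$ and $l(x)\ge b'x+c$ with $b>1>b'$, and plugs these into $S_l(\eta,X)$ to get the blow-up at $\pm\infty$; your route via $\lim_{x\to\pm\infty}l'_{+}(x)$ and monotone convergence is equivalent, just one differentiation removed. For the dual formula, the paper first appeals to \citep[Theorem~2.2]{cheridito2008} to obtain $\rho(X)=\max_Q\{E_Q[-X]-\alpha(Q)\}$ with the abstract penalty, and then \emph{computes} $\alpha(Q)=E_P[l^\ast(dQ/dP)]$ via the interchange theorem \citep[Theorem~14.60]{rockafellar02}; this yields \eqref{eq:robrep} for every loss function, without \ref{assumption}. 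Your strategy---Fenchel--Young for the easy inequality plus an explicit maximizer $Y^\ast\in\partial l(\eta^\ast-X)$ normalised by \eqref{eq:charopteta}---is more concrete and has the virtue of identifying $Q^\ast$ and checking $Q^\ast\in\mathcal{M}_{1,l^\ast}(P)$ in one stroke (the paper only refers to \citep{kupper01} for this), but as you note it relies on the existence of $\eta^\ast$, so outside of \ref{assumption} the specific form of the penalty in \eqref{eq:robrep} is left to the citation rather than computed.
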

\begin{proof}
	Since $l(x)\geq x$ and $\mathcal{X}_{l}\subseteq L^1$, it holds $S_l(\eta,X)\geq E[-X]>-\infty$, hence $\rho(X)>-\infty$.
	On the other hand, $S_l(0,X)\leq E[l( X^-)]\leq E[l( \abs{X})]< +\infty$ since $X \in \mathcal{X}_l$.
	Hence $\rho(X)<+\infty$.

	Let us show that we have an optimal allocation determined by means of relation \eqref{eq:charopteta}.
	Given $X \in \mathcal{X}_{l}$, the function $\eta\mapsto S_l(\eta,X)$ is real-valued and convex.
	Furthermore, assumption \ref{assumption} ensures that $l(x)\geq bx +c$ and $l(x)\geq b^\prime x +c$ for all $x \in \R$ for some $b>1>b^\prime$ and $c \in \mathbb{R}$.
	Hence, it holds
	\begin{equation*}
		S_l\left(\eta,X\right)\geq E\left[ -bX+b\eta+c \right]-\eta\geq (b-1)\eta-b E[X]+c
	\end{equation*}
	which goes to $+\infty$ as $\eta$ tends to $+\infty$ since $b-1>0$.
	A similar argumentation with $b^\prime$ implies that $S_l(\eta,X)$ goes to $+\infty$ as $\eta$ tends to $-\infty$ since $b^\prime-1<0$.
	Hence, there exists a minimum $\eta^\ast \in \R$ such that \eqref{eq:optialloc} holds.
	A straightforward argumentation shows that $\eta^\ast \in [\essinf X, \esssup X]$.
	This optimal allocation fulfills the first order optimality criteria
	\[\lim_{\varepsilon \nearrow 0} \frac{S_l(\eta^\ast+\varepsilon,X)-S_l(\eta^\ast,X)}{\varepsilon}\leq 0\leq \lim_{\varepsilon \searrow 0} \frac{S_l(\eta^\ast+\varepsilon,X)-S_l(\eta^\ast,X)}{\varepsilon}.\]
	An application of Lebesgue's dominated convergence theorem allows to 
interchange limits and expectations and get relation \eqref{eq:charopteta}.

	The fact that $\rho$ is a cash additive risk measure is well-known, see \citep{ben-tal01}.
	The conditions of \citep[Theorem 2.2]{cheridito2008} are fulfilled and it holds
	\begin{equation*}
		\rho(X)=\max_{Q \in \mathcal{M}_{1,l^*}(P)}\Set{E_{Q}[-X]-\alpha(Q)}, \quad X \in \mathcal{X}_{l}
	\end{equation*}
	where 
	\begin{equation}
		\alpha(Q):=\sup_{X \in \mathcal{X}_{l}}\Set{E_{Q}[-X]-\rho(X)}, \quad Q \in \mathcal{M}_{1,l^\ast}(P).
		\label{}
	\end{equation}
	However, since $\mathcal{X}_{l}$ is a decomposable space in the sense of \citet[Definition 14.59]{rockafellar02} and $l$ is a normal integrand, we can apply \citep[Thereom 14.60]{rockafellar02} which yields 
	\begin{multline*}
		\alpha(Q)=\sup_{X \in \mathcal{X}_{l},\eta \in \R}\Set{E\left[ -\frac{dQ}{dP}X \right]+\eta-E\left[ l\left( -X+\eta \right) \right]}\\
		=\sup_{\eta\in \R}\Set{E\left[ \sup_{x \in \R}\Set{-\frac{dQ}{dP}x-l(-x+\eta)} \right]+\eta}=\sup_{\eta\in \R}\Set{E\left[ \sup_{x \in \R}\Set{\frac{dQ}{dP}\left(x-\eta\right)-l(x)} \right]+\eta}\\
		=\sup_{\eta\in \R}\Set{E\left[ l^\ast\left( \frac{dQ}{dP} \right) \right]+\eta\left(1-E\left[\frac{dQ}{dP}  \right]  \right)}=E\left[ l^\ast\left( \frac{dQ}{dP} \right) \right].
	\end{multline*}
	This shows equation \eqref{eq:robrep}.
	The representation in terms of the optimal density follows along the lines of \citep{kupper01}, by suitably adapting the proof in the case where $l$ is only convex and not necessarily differentiable.
\end{proof}

Next, we turn our attention to risk contributions, that is to the risk of 
individual factors or subportfolios of a portfolio.
The \emph{risk contribution} of a risk factor $Y$ to a portfolio $X$ is defined as follows
\begin{equation}
	RC\left( X;Y \right):=	\limsup_{\varepsilon \downarrow 0}\frac{\rho(X+\varepsilon Y)-\rho(X)}{\varepsilon}.
	\label{}
\end{equation}
In the framework of Optimized Certainty Equivalents this can also be computed 
explicitly.
\begin{proposition}\label{prop:riskcontrib}
	Let $X,Y \in \mathcal{X}_{l}$.
	If $l$ is differentiable or $X$ has a continuous distribution, then
	\begin{equation}
	    RC(X;Y)=-E\left[ Y l^\prime\left( \eta^{\ast}-X \right)\right],
	\end{equation}
	where $\eta^\ast$ satisfies $E[l^\prime(\eta^\ast-X)]=1$.
	Otherwise, we have the following bounds
	\begin{equation*}
	    E\!\left[ Y^-l_{-}^\prime\left( \eta^\ast\!-\!X \right)-Y^+l_{+}^\prime\left( \eta^{\ast}\!-\!X \right)\right]\leq RC\!\left( X;Y \right)\leq 
	    E\!\left[ Y^-l_{+}^\prime\left( \eta^\ast\!-\!X \right)-Y^+l^\prime_{-}\left( \eta^{\ast}\!-\!X \right)\right],
	\end{equation*}
	for $\eta^*$ such that 	
	$E\left[ l^\prime_{-} \left( \eta^\ast -X \right) \right]\leq 1\leq E\left[ l^\prime_{+}\left( \eta^\ast -X \right) \right]$.
\end{proposition}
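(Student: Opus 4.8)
The plan is to exploit that $\rho$, being a monetary and hence convex risk measure, makes the map $g(\varepsilon):=\rho(X+\varepsilon Y)$ convex in $\varepsilon$. Consequently the difference quotient $\varepsilon\mapsto(g(\varepsilon)-g(0))/\varepsilon$ is nondecreasing, so the $\limsup$ defining $RC(X;Y)$ is in fact the right-hand derivative $g^\prime_+(0)$ and exists as an honest limit. Both the upper and the lower estimate will then be produced by the same device: feed the optimizer computed at $\varepsilon=0$ into the problem at level $\varepsilon$ as a \emph{sub-optimal} competitor. For the upper bound I use the primal (infimal) formulation, for the lower bound the dual (supremal) representation \eqref{eq:robrep}; in each case the one-sidedness of $l^\prime$ is what produces the asymmetric derivatives appearing in the statement.

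For the upper bound, fix an optimal allocation $\eta^\ast$ for $X$, so that $g(0)=S_l(\eta^\ast,X)$ while $g(\varepsilon)\le S_l(\eta^\ast,X+\varepsilon Y)$. Subtracting and dividing by $\varepsilon>0$ gives
\[
\frac{g(\varepsilon)-g(0)}{\varepsilon}\le E\!\left[\frac{l(\eta^\ast-X-\varepsilon Y)-l(\eta^\ast-X)}{\varepsilon}\right].
\]
By convexity of $l$ the integrand decreases, as $\varepsilon\downarrow0$, to the right-hand derivative of $\varepsilon\mapsto l(\eta^\ast-X-\varepsilon Y)$ at $0$, which pointwise equals $Y^- l^\prime_+(\eta^\ast-X)-Y^+ l^\prime_-(\eta^\ast-X)$, the sign of $Y$ selecting which one-sided derivative of $l$ is active. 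Passing to the limit under the expectation then yields $RC(X;Y)\le E[Y^- l^\prime_+(\eta^\ast-X)-Y^+ l^\prime_-(\eta^\ast-X)]$.

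For the lower bound I turn to \eqref{eq:robrep}. Let $Q^\ast$ be a maximizer for $X$; by Proposition \ref{prop:OCE} such a $Q^\ast$ exists and its density satisfies $l^\prime_-(\eta^\ast-X)\le dQ^\ast/dP\le l^\prime_+(\eta^\ast-X)$ with $\eta^\ast$ fulfilling \eqref{eq:charopteta}. Keeping this $Q^\ast$ fixed as a sub-optimal competitor in the representation of $g(\varepsilon)$ gives $g(\varepsilon)\ge E_{Q^\ast}[-(X+\varepsilon Y)]-E_P[l^\ast(dQ^\ast/dP)]$, whereas $g(0)$ equals the same expression at $\varepsilon=0$; hence $(g(\varepsilon)-g(0))/\varepsilon\ge E[-Y\,dQ^\ast/dP]$, and letting $\varepsilon\downarrow0$ preserves this. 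Writing $Y=Y^+-Y^-$ and using the sandwich on $dQ^\ast/dP$ bounds the right-hand side below by $E[Y^- l^\prime_-(\eta^\ast-X)-Y^+ l^\prime_+(\eta^\ast-X)]$, which is precisely the claimed lower estimate.

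Finally, when $l$ is differentiable we have $l^\prime_-=l^\prime_+=l^\prime$ and the two bounds coincide, giving $RC(X;Y)=-E[Y\,l^\prime(\eta^\ast-X)]$ and reducing \eqref{eq:charopteta} to $E[l^\prime(\eta^\ast-X)]=1$. When instead $X$ has a continuous law, the countable set of kinks of the convex function $l$ is charged with probability zero by $\eta^\ast-X$, so $l^\prime_-(\eta^\ast-X)=l^\prime_+(\eta^\ast-X)$ almost surely and the bounds again collapse. The step I expect to require the most care is the interchange of limit and expectation in the upper-bound argument: this rests on $Y\,l^\prime_\pm(\eta^\ast-X)\in L^1$, which holds because $l^\prime_\pm(\eta^\ast-X)$ lies in $\mathcal{X}_l^\ast$ (it bounds the admissible densities of \eqref{eq:robrep}) and therefore pairs integrably with $Y\in\mathcal{X}_l$; the monotone difference quotients are then dominated by the integrable envelope $l(\eta^\ast-X-Y)-l(\eta^\ast-X)$, so dominated convergence applies.
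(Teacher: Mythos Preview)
Your proof follows essentially the same route as the paper's: the upper bound comes from plugging the primal optimizer $\eta^\ast$ into $S_l(\cdot,X+\varepsilon Y)$, the lower bound from plugging the dual optimizer $Q^\ast$ into the robust representation \eqref{eq:robrep}, and the two special cases collapse the bounds for the reasons you state. Your additional remark that $\varepsilon\mapsto\rho(X+\varepsilon Y)$ is convex, so that the $\limsup$ is an honest right derivative, is a nice clarification the paper omits.

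One minor point of care: the sentence ``$l'_\pm(\eta^\ast-X)\in\mathcal{X}_l^\ast$ because it bounds the admissible densities of \eqref{eq:robrep}'' has the inequality the wrong way around --- the optimal densities are bounded \emph{by} $l'_\pm(\eta^\ast-X)$, not the reverse, so this does not by itself place $l'_\pm(\eta^\ast-X)$ in $\mathcal{X}_l^\ast$. Fortunately you do not need this: your envelope $l(\eta^\ast-X-Y)-l(\eta^\ast-X)$ is indeed integrable (both terms lie in $L^1$ since $\eta^\ast-X-Y,\,\eta^\ast-X\in\mathcal{X}_l$, $l(z)\le l(|z|)$, and $l(z)\ge z$), and because the difference quotients are monotone in $\varepsilon$, a one-sided integrable upper bound already suffices via reverse Fatou. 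This is exactly the mechanism the paper uses, with the slightly different explicit dominant $2\big(l(|\eta^\ast-X|+|Y|)+|\eta^\ast-X|\big)$.
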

\begin{proof}
	In case $l$ is differentiable and strictly convex, the proof can be found in \cite[Theorem 3.1]{kupper01}.
	Below we sketch the proof for the general case.
	Let $\eta^\ast$ be such that  $E[ l^\prime_{-}(\eta^\ast-X) ]\leq 1\leq E[l_+^\prime(\eta^\ast-X)]$, that is $\rho(X)=S_l(\eta^\ast,X)=E[l(\eta^\ast-X)]-\eta^\ast$.
	Using the convexity and monotonicity of $l$, and that $-l(x)\leq -x$, we deduce for $0<\varepsilon<1/2$ that it holds
	\begin{equation*}
		\frac{l\left( Z-\varepsilon Y \right)-l(Z)}{\varepsilon}\leq \frac{1}{1-\varepsilon}\big( l\left( Z-(1-\varepsilon)Y \big)-l\left( Z \right) \right)\leq 2\big(l\left( \abs{Z}+\abs{Y} \right)+\abs{Z}\big)\in L^1
	\end{equation*}
	for every $Z,Y \in \mathcal{X}_l$.
	Hence, by dominated convergence, it follows that
	\begin{multline*}
		\limsup_{\varepsilon \downarrow 0}\frac{\rho(X+\varepsilon Y)-\rho(X)}{\varepsilon}=\limsup_{\varepsilon \downarrow 0} \frac{\rho(X+\varepsilon Y)-S_l(\eta^\ast,X)}{\varepsilon}\\
		\leq \limsup_{\varepsilon \downarrow 0} \frac{S_l(\eta^\ast, X+\varepsilon Y)-S_l(\eta^\ast,X)}{\varepsilon}\leq E\left[\limsup_{\varepsilon \downarrow 0} \frac{l\left(\eta^\ast- X-\varepsilon Y\right)-l\left(\eta^\ast-X\right)}{\varepsilon}\right]\\
		= E\left[ Y^-l^\prime_{+}(\eta^\ast-X)-Y^+l_{-}^\prime\left( \eta^\ast-X \right) \right].
	\end{multline*}
	On the other hand, let $Z \in \mathcal{X}_{l^\ast}$ such that 
$l^\prime_-(\eta^\ast-X)\leq Z\leq l_{+}^\prime(\eta^\ast -X)$ and $E[Z]=1$.
	It follows that $ZY \leq 
Y^+l_{+}^\prime(\eta^\ast-X)-Y^{-}l_-^\prime(\eta^\ast -X)$.
	By means of \eqref{eq:robrep}, it follows that $\rho(X)=-E[ZX-l^\ast(Z)]$ and 
	\begin{multline*}
		\rho(X+\varepsilon Y)\geq -E\left[ Z(X+\varepsilon Y) \right]-E\left[ l^\ast(Z) \right]=\rho(X)-\varepsilon E\left[ ZY \right]\\
		\geq \rho(X)+\varepsilon E\left[ Y^{-}l_{-}^\prime(\eta^\ast -X)-Y^+l_{+}^\prime(\eta^\ast-X) \right]
	\end{multline*}
	Hence,
	\begin{equation*}
	    \liminf_{\varepsilon \downarrow 0} \frac{\rho\left( X+\varepsilon Y \right)-\rho\left( X \right)}{\varepsilon}\geq E\left[ Y^{-}l^\prime_{-}(\eta^\ast -X)-Y^+l_{+}^\prime(\eta^\ast-X) \right],
	\end{equation*}
	showing the bounds.
	If $l$ is differentiable then $l^\prime_{-}=l^\prime_{+}=l^\prime$ and the lower and upper bounds coincide.
	If $X$ has a continuous distribution, the set $\set{l^{\prime}_{-}( \eta^\ast-X )=l^{\prime}_{+}(\eta^\ast-X)}$ has measure one since $l^\prime_{-}$ has only countably many discontinuity points, which concludes the proof.
\end{proof}

\section{Numerical Computation of Optimized Certainty Equivalents}
\label{sec:fourier}

In this section, we develop numerical schemes for the computation of optimized certainty equivalents based on transform methods and deterministic root finding algorithms.
We also discuss the applicability of these methods for different risk scenarios and provide an example for the computation of risk contributions.
In general, the computation of the optimal allocation $\eta^*$ and the risk measure $\rho(X)$ in Proposition \ref{prop:OCE} can be performed in two steps:
\begin{enumerate}[label=\textit{Step \arabic*:},leftmargin=4em]
	\item use a deterministic root finding algorithm to compute $\eta^*$ in \eqref{eq:charopteta}, combined with transform methods for the computation of the expectations;
	\item use transform methods once more to compute the expectation $E[l(\eta^*-X)]$ and thus $\rho(X)$.
\end{enumerate}

Let $l$ be a loss function as described in the previous section and denote by $l_R$ the \emph{dampened} loss function, defined by $l_R(x):=\e^{-Rx}l(x)$, for $R\in\R$.
Moreover, let $\widehat{f}$ denote the Fourier transform of a function $f$, i.e. $\widehat{f}(u)=\int \e^{iux}f(x)\dx$, and $M_X$ the (extended) moment generating function of $X$, i.e. $M_X(u)=E[\e^{uX}]$, for suitable $u\in\C$.
By $L^1$, resp. $L^1_{\textrm{bc}}$, we denote the set of measurable functions on the real line which are integrable, resp. bounded, continuous and integrable, with respect to the Lebesgue measure.
We also denote by $K^\circ$ the interior of a set $K$ and by $\Im(z)$ the imaginary part of the complex number $z$.

The next theorem provides a general scheme for the computation of optimal allocations and risk measures in our framework following the two-step procedure described above.

\begin{theorem}\label{thm-fourier}
	Let $X \in \mathcal{X}_l$ and define
	\begin{align}
		\mathcal{I}&:=\Set{R \in \R: M_X(R)<\infty}\\
		\mathcal{J}&:=\Set{R \in \R:l_R\in L^1_{\textrm{bc}}
		\text{ and }\widehat{l}_R\in L^1}\\
		\mathcal{J}'&:=\Set{R \in \R:l_R'\in L^1_{\textrm{bc}}
		\text{ and }\widehat{l'_R}\in L^1}.
	\end{align}
	Assume that the following condition holds:
	\begin{enumerate}[label=\textbf{(A-\Roman*)},leftmargin=2cm]
		\item\label{hypoIJ} $\mathcal{I}\cap\mathcal{J}\neq\emptyset$
			\quad and \quad $\mathcal{I}\cap\mathcal{J}'\neq\emptyset$.
	\end{enumerate}
	Then the optimal allocation $\eta^*$ is the unique root of the equation $f(\eta)=0$, where
	\begin{equation}
		f(\eta) = \frac{1}{2\pi} \int_\R \e^{(R'-iu)\eta} M_X(iu-R') \widehat{l'}(u+iR') \du -1,
	\end{equation}
	with $R'\in\mathcal{I}\cap\mathcal{J}'$, and can be computed by a deterministic root finding scheme.
	Once $\eta^*$ has been determined, the risk measure has the following representation:
	\begin{equation}\label{rep:main}
		\rho(X) = \frac{1}{2\pi} \int_\R \e^{(R-iu)\eta^*} M_X(iu-R) \widehat{l}(u+iR) \du - \eta^*,
	\end{equation}
	for $R\in\mathcal{I}\cap\mathcal{J}$.
\end{theorem}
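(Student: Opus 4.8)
The plan is to reduce both displayed identities to a single Fourier-inversion formula for $E[l(\eta-X)]$ (respectively for $E[l'(\eta-X)]$), and then to read off $\eta^*$ and $\rho(X)$ from Proposition \ref{prop:OCE}. First I would fix $R\in\mathcal I\cap\mathcal J$, which is non-empty by Assumption \ref{hypoIJ}. Since $l_R\in L^1_{\textrm{bc}}$ and $\widehat{l_R}\in L^1$, the Fourier inversion theorem applies to $l_R$ pointwise, giving $l_R(x)=\frac{1}{2\pi}\int_\R\e^{-iux}\widehat{l_R}(u)\,\du$. Observing that $\widehat{l_R}(u)=\int_\R\e^{iux}\e^{-Rx}l(x)\,\dx=\widehat l(u+iR)$ and undoing the dampening yields
\[ l(x)=\frac{1}{2\pi}\int_\R\e^{(R-iu)x}\,\widehat l(u+iR)\,\du,\qquad x\in\R. \]

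Next I would substitute $x=\eta-X$ and take expectations. Writing $\e^{(R-iu)(\eta-X)}=\e^{(R-iu)\eta}\e^{(iu-R)X}$ and recognising $E[\e^{(iu-R)X}]=M_X(iu-R)$, a Fubini argument produces
\[ E[l(\eta-X)]=\frac{1}{2\pi}\int_\R\e^{(R-iu)\eta}\,M_X(iu-R)\,\widehat l(u+iR)\,\du. \]
Together with the identity $\rho(X)=E[l(\eta^*-X)]-\eta^*$ from Proposition \ref{prop:OCE}, evaluation at $\eta=\eta^*$ is exactly the representation \eqref{rep:main}. Repeating the same computation for $l'$ with $R'\in\mathcal I\cap\mathcal J'$, where $\mathcal J'$ supplies the inversion of $l'_{R'}$, gives $f(\eta)=E[l'(\eta-X)]-1$ in the stated integral form.

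The interchange in this Fubini step is the point that needs care and is the main obstacle. Membership $R\in\mathcal I$ guarantees that $u\mapsto M_X(iu-R)$ is well defined and bounded on $\R$, while $\widehat l(\cdot+iR)=\widehat{l_R}$ lies in $L^1$ by the definition of $\mathcal J$; consequently $\int_\R E\big[\,\abs{\e^{(R-iu)(\eta-X)}\,\widehat l(u+iR)}\,\big]\,\du<\infty$, which legitimises exchanging $E[\cdot]$ and $\int_\R\cdot\,\du$. The analogous estimate with $\mathcal J'$ in place of $\mathcal J$ handles $f$, and this is precisely the role of Assumption \ref{hypoIJ}.

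Finally I would identify $\eta^*$. By Proposition \ref{prop:OCE}, under Assumption \ref{assumption} an optimal allocation exists and, $l$ being differentiable, is characterised by $E[l'(\eta^*-X)]=1$, that is $f(\eta^*)=0$. Since $f$ is the derivative of the convex, coercive map $\eta\mapsto S_l(\eta,X)$, it is non-decreasing, and the coercivity bounds used in the proof of Proposition \ref{prop:OCE}, namely $l(x)\ge bx+c$ and $l(x)\ge b'x+c$ with $b>1>b'$, force $f$ to change sign as $\eta$ ranges over $\R$; strict convexity of $S_l$ then pins the root down uniquely. Hence $\eta^*$ is the unique root of $f$ and can be located by a deterministic root-finding scheme, which completes the proof.
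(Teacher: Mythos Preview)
Your argument is correct and takes essentially the same route as the paper: both derive the integral formulas for $E[l(\eta-X)]$ and $E[l'(\eta-X)]$ via Fourier methods and then invoke Proposition~\ref{prop:OCE}. The only difference is that the paper delegates the Fourier representation to \citep[Theorem~2.2]{EberleinGlauPapapantoleon08}, whereas you spell out the inversion-plus-Fubini argument explicitly. One small caveat: your appeal to ``strict convexity of $S_l$'' for uniqueness of the root is not warranted by the standing hypotheses (only convexity of $l$ is assumed), though the paper's own proof is equally terse on this point and simply asserts uniqueness from continuity of $l'$.
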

\begin{proof}
	Since we have assumed that the derivative of the loss function is continuous, \eqref{eq:charopteta} yields that the optimal allocation $\eta^*$ is the unique root of the equation $f(\eta)=0$, where
	\begin{equation*}
		f(\eta)=E[l'(\eta-X)]-1.
	\end{equation*}
	The Fourier representation of the function $f$ follows directly from \citep[Theorem 2.2]{EberleinGlauPapapantoleon08}.
	In addition, once $\eta^*$ has been computed by a deterministic root-finding algorithm, \eqref{eq:optialloc} yields that
	\begin{equation*}
		\rho(X)=E[l(\eta^*-X)]-\eta^*
	\end{equation*}
	and the Fourier representation follows again from \citep[Theorem 2.2]{EberleinGlauPapapantoleon08}.
\end{proof}

\begin{remark}
	The assumption of continuity of $l'$ can be easily relaxed by assuming more regularity of the random variable $X$;
	see the `dual' conditions in \citep[Remark 2.3]{EberleinGlauPapapantoleon08}.
	Moreover, we often divide the loss function between $\R^+$ and $\R^-$ where the two parts have different growth and regularity, and therefore we consider distinct sets $\mathcal{J}_1$ and $\mathcal{J}_2$ for each one of them.
	This is, for instance, the case in the CV@R example.
\end{remark}

The root of the equation $f(\eta)=0$ can be determined by standard root-finding
algorithms, see e.g. \citet{Stoer_Bulirsch_2002} or \citet{Press_etal_2007}. A
natural choice is to use the \textit{secant method}, where one starts with two
initial values $\eta_0,\eta_1$ such that $f(\eta_0)\neq f(\eta_1)$ and the root
$\eta^*$ is determined by 
\begin{align*}
\eta^*=\lim_{k\to\infty}\eta_k
 \quad\text{ where }\quad
\eta_{k+1} = \eta_{k} 
	- f(\eta_{k}) \cdot \frac{\eta_{k}-\eta_{k-1}}{f(\eta_{k})-f(\eta_{k-1})}.
\end{align*}
This method converges with superlinear rate if the initial values are sufficiently close to the root.
A more convenient choice is to use \textit{Brent's method}, which combines the bisection, the secant and the inverse quadratic interpolation methods; see \citet{Brent_1973} for all the details.
This method is guaranteed to converge and the rate is again superlinear (equal to $\frac{1+\sqrt{5}}{2}$) if the function is continuously differentiable near the root.
In the numerical examples, we will use Brent's method, since this is the standard root-finding algorithm implemented in Matlab.

\begin{remark}
	Although $l'$ might not be continuously differentiable (or even continuous), $f$ could still be continuously differentiable if the random variable $X$ is sufficiently regular, since the density of $X$ will `smoothen' $f$.
\end{remark}

\subsection{Explicit Fourier Representation for OCEs}\label{sec:diffloss}

In the following subsections, we provide explicit formulas for the computation of optimal allocations and OCE-based risk measures using Fourier methods and deterministic root finding schemes.
Before proceeding with examples of loss functions that fit into our framework, we will briefly review Value at Risk.  

\subsubsection{Value at Risk}

Denote the upper quantile function of the random variable $X$ by $q^+_X$, that is
\begin{equation*}
	q^+_X(u) = \inf\{x\in\R: P(X\le x)>u \}.
\end{equation*}
Then, the \textit{Value at Risk} (V@R) at some level $\lambda\in(0,1)$ is defined as
\begin{equation*}
	V@R_\lambda(X) = - q^+_X(\lambda),
\end{equation*}
see e.g. \citet[Section~4.4]{foellmer01}.
Value at Risk can be computed in a similar fashion to the OCE-based risk measures in Theorem \ref{thm-fourier}, i.e. by combining a Fourier representation for the cumulative distribution function with a root-finding algorithm.

\subsubsection{Entropic Risk Measure}

A classical example that fits in this framework is the entropic loss function
\begin{align*}
	l(x) &= \frac{\e^{\gamma x}-1}{\gamma}
\end{align*}
with $\gamma>0$.
The derivative and the conjugate functions are
\begin{align*}
	l^\prime(x) &= \e^{\gamma x} \quad\text{ and }\quad
	l^\ast(y) = \frac{y\ln (y)}{\gamma}-\frac{y-1}{\gamma}.
\end{align*}
The optimal allocation $\eta^*$ and the risk measure $\rho(X)$ can be computed explicitly and are provided by
\begin{align*}
	\eta^* &= - \frac{1}{\gamma}\ln\left( E\left[ \e^{-\gamma X} \right] \right),\\
	\rho(X)&= -\eta^\ast=\frac{1}{\gamma}\ln\left( E\left[ \e^{-\gamma X} \right] \right)\\
	&=\sup_{Q \in \mathcal{M}_{1,l^\ast}(P)}\Set{E_{Q}\left[ -X \right]-E_{Q}\left[ \ln\left( \frac{\ud Q}{\ud P} \right) \right]}.
\end{align*}
There exist many models where the moment generating function, i.e. the quantity $E[\e^{-\gamma X}]$, is known explicitly, for example L\'evy or Sato processes and affine models.
In this case, also $\eta^*$ and $\rho(X)$ can be computed explicitly.

\subsubsection{Conditional Value at Risk}

The most interesting example from the point of view of practical applications is Conditional Value at Risk, also known as Average Value at Risk or Expected Shortfall.
These notions coincide if $X$ has a continuous distribution, see \citep[Corollary 4.49]{foellmer01}.
Conditional Value at Risk is a special case of an OCE where the loss function is
\begin{equation}\label{loss_AV@R}
	l(x) = - \gamma_1x^- + \gamma_2x^+
	= \begin{cases}
		\gamma_1x &\text{ if } x\le0\\
		\gamma_2x &\text{ if } x>0,
	\end{cases}
\end{equation}
with $\gamma_2>1>\gamma_1\geq 0$.
The left-hand derivative equals
\begin{equation*}
    l_{-}^\prime(x) = \gamma_11_{\{x\leq0\}} + \gamma_21_{\{x>0\}},
\end{equation*}
while the conjugate function is
\begin{equation*}
	l^\ast(x)=
	\begin{cases}
		0\quad &\text{ if }\gamma_2\leq x\leq \gamma_1\\
		+\infty&\text{ otherwise .}
	\end{cases}
\end{equation*}
In case $\gamma_1=0$, the resulting risk measure corresponds to the standard CV@R with parameter $1/\gamma_2$, see for instance \citep{rockafellar00}.
The optimal allocation can be computed explicitly, in terms of the quantile function $q^+_X$ of $X$, and is provided by
\begin{equation}\label{opt_alloc_AV@R}
	\eta^* = q^+_X\left( \frac{1-\gamma_1}{\gamma_2-\gamma_1} \right).
\end{equation}
The following representation for this risk measure is also standard in the literature
\begin{align}\label{OCE-AV@R}
	\rho\left( X \right)
	&=-\gamma_2\int_{0}^{\frac{1-\gamma_1}{\gamma_2-\gamma_1}}q^+_{X}(s)\ds 
	-\gamma_1\int_{\frac{1-\gamma_1}{\gamma_2-\gamma_1}}^{1}q^+_{X}(s)\ds\\
	&=\sup_{Q \in \mathcal{M}_{1,l^\ast}(P)}
	\Set{E_{Q}\left[ -X \right]: \gamma_1\leq d Q/d P\leq \gamma_2 }.
\end{align}
In particular, for the special case of CV@R with parameter $\lambda=1/\gamma_2$, it holds that $\eta^*=q^+_X\left(\lambda\right)$ and
\begin{equation}\label{CV@R-std}
	CV@R_{\lambda}(X)=-\frac{1}{\lambda}\int_{0}^{\lambda}q^+_{X}(s)ds= \frac{1}{\lambda}\int_{0}^{\lambda}V@R_s(X)ds.
\end{equation}
The aim of the next result is to provide an alternative representation for $\rho(X)$ using Fourier transform methods.
\begin{proposition}\label{prop-AV@R}
	Assume that the optimal allocation $\eta^*$ is computed by \eqref{opt_alloc_AV@R}.
	Let $X\in L^0$ be a random variable such that $0\in\mathcal{I}^\circ$.
	Then $X \in L^1=\mathcal{X}_l$ and the risk measure $\rho$ admits the following representation
	\begin{equation}\label{OCE-AV@R-fourier}
		\rho(X)= \frac{\gamma_1}{2\pi} \int_\R \frac{\e^{(R_1-iu)\eta^*}}{(u+iR_1)^2}M_X(iu-R_1) \du - \frac{\gamma_2}{2\pi} \int_\R \frac{\e^{(R_2-iu)\eta^*}}{(u+iR_2)^2}M_X(iu-R_2) \du- \eta^*,
	\end{equation} 
	where $R_1\in\mathcal{I}\cap(-\infty,0)$ and $R_2\in\mathcal{I}\cap(0,+\infty)$.
	In particular, for CV@R we get
	\begin{equation}\label{CV@R-fourier}
		CV@R_{\lambda}(X)= -\frac{1}{2\pi\lambda} \int_\R \frac{\e^{(R-iu)\eta^*}}{(u+iR)^2}M_X(iu-R) \du	- \eta^*,
	\end{equation}
	where $\lambda=1/\gamma_2$ and $R \in \mathcal{I}\cap (0,+\infty)$.
\end{proposition}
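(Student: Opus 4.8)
The plan is to reduce the statement to the optimal-allocation identity $\rho(X) = E[l(\eta^*-X)]-\eta^*$ furnished by Proposition~\ref{prop:OCE}, and then to apply the Fourier machinery of Theorem~\ref{thm-fourier} separately to the two linear ramps that make up the piecewise-linear loss, using two different dampening parameters. First I would record the standing integrability facts. Since $0\in\mathcal{I}^\circ$, the moment generating function $M_X$ is finite on an open interval around the origin, so $X$ has finite exponential moments on both sides and in particular $X\in L^1$. Because $l$ has linear growth, $E[l(c\abs{X})]=\gamma_2 c\,E[\abs{X}]<\infty$ for every $c>0$, hence $\mathcal{X}_l=L^1$ and $X\in\mathcal{X}_l$. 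Moreover $l$ satisfies Assumption~\ref{assumption} (as $\gamma_2>1>\gamma_1$), so Proposition~\ref{prop:OCE} applies and $\rho(X)=E[l(\eta^*-X)]-\eta^*$ with $\eta^*$ the quantile in \eqref{opt_alloc_AV@R}.

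The core step is the explicit Fourier computation. I would split $l(x)=\gamma_2 x^+-\gamma_1 x^-$, so that $E[l(\eta^*-X)]=\gamma_2\,E[(\eta^*-X)^+]-\gamma_1\,E[(\eta^*-X)^-]$, and represent each expectation via \eqref{rep:main}. For the positive ramp $p(x)=x^+$ I dampen with $R_2>0$: the function $\e^{-R_2 x}x^+$ lies in $L^1_{\textrm{bc}}$, and a direct integration gives $\widehat{p}(u+iR_2)=\frac{1}{(R_2-iu)^2}=-\frac{1}{(u+iR_2)^2}$, which is $O(u^{-2})$ and hence in $L^1$. For the negative ramp $n(x)=-x^-=x\,1_{\{x\le 0\}}$ I dampen with $R_1<0$, obtaining $\widehat{n}(u+iR_1)=-\frac{1}{(iu-R_1)^2}=\frac{1}{(u+iR_1)^2}\in L^1$. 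Substituting these transforms into the representation for each piece and summing produces exactly \eqref{OCE-AV@R-fourier}; the CV@R formula \eqref{CV@R-fourier} then follows by setting $\gamma_1=0$ (which annihilates the first integral) together with $\gamma_2=1/\lambda$ and $R=R_2$.

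Finally I would verify that the hypotheses of the Fourier representation genuinely hold, so that Fubini's theorem legitimises the interchange of expectation and integration in each piece. Admissible parameters $R_1\in\mathcal{I}\cap(-\infty,0)$ and $R_2\in\mathcal{I}\cap(0,+\infty)$ of small modulus exist because $0\in\mathcal{I}^\circ$; for such choices $M_X$ is finite and uniformly bounded (by its value at the real part $-R_j$) along each vertical line $u\mapsto iu-R_j$, while the kernels $(u+iR_j)^{-2}$ are $O(u^{-2})$ and therefore integrable, so each integrand is dominated by an $L^1$ function. The one genuinely delicate point—and the thing to get right—is the bookkeeping of the two opposite dampening directions: no single parameter can tame $l$ at both $\pm\infty$, since the dampened loss blows up at one end, which is precisely why the decomposition into $x^+$ and $x^-$ with $R_2>0$ and $R_1<0$ is forced, and why the signs of the two Fourier transforms must be tracked carefully in order to land on the stated minus sign in front of the $\gamma_2$-integral.
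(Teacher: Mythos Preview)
Your proposal is correct and follows essentially the same route as the paper: split the piecewise-linear loss into the two ramps, dampen each with the appropriate sign of $R$, compute the explicit Fourier transforms $-1/z^2$, and invoke Theorem~\ref{thm-fourier} piece by piece. The only cosmetic differences are that the paper works with $l_1(x)=(-x)^+$ rather than your $n(x)=-x^-$ (a harmless sign), and that the paper verifies $\widehat{l_{j,R_j}}\in L^1$ via a Sobolev $H^1$ argument citing \cite[Lemma~2.5]{EberleinGlauPapapantoleon08}, whereas you read off integrability directly from the explicit $O(u^{-2})$ decay---your shortcut is entirely legitimate here since the transforms are computed in closed form.
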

\begin{proof}
	The loss function grows linearly while $X$ has finite exponential 
moments, thus $\mathcal{X}_l=L^1$ and $X\in L^1$.
	Since $\eta^*$ is already computed using the first order condition \eqref{eq:charopteta} for the loss function \eqref{loss_AV@R}, see	\eqref{opt_alloc_AV@R}, we will apply the second part of Theorem \ref{thm-fourier} directly to representation \eqref{eq:optialloc}.
	We get
	\begin{align}\label{Fourier-CV@R-rep}
		\rho(X) 
		&= E\left[l(\eta^*-X)\right] - \eta^*
		= E\left[- \gamma_1(\eta^*-X)^- + \gamma_2(\eta^*-X)^+ \right]- \eta^* \nonumber\\
		&= -\gamma_1E\left[(X-\eta^*)^+\right] + \gamma_2 E\left[(\eta^*-X)^+\right]- \eta^* \nonumber\\
		&= -\frac{\gamma_1}{2\pi} \int_\R \e^{(R_1-iu)\eta^*} M_X(iu-R_1) \widehat{l}_1(u+iR_1) \du \nonumber\\
		&\qquad+ \frac{\gamma_2}{2\pi} \int_\R \e^{(R_2-iu)\eta^*} M_X(iu-R_2)\widehat{l}_2(u+iR_2) \du
		- \eta^*,
	\end{align}
	for $R_1\in\mathcal{I}\cap\mathcal{J}_1$, $R_2\in\mathcal{I}\cap\mathcal{J}_2$, where we define the functions
	\begin{equation*}
		l_1(x) = (-x)^+
		\quad\text{ and }\quad
		l_2(x) = (x)^+.
	\end{equation*}
	Now, we just have to compute the Fourier transforms of the functions $l_1$ and $l_2$, determine the sets $\mathcal{J}_1$ and $\mathcal{J}_2$, and show that the prerequisites of Theorem \ref{thm-fourier} are	satisfied.
	
	The Fourier transform of $l_1$, for $z\in\C$ with $\Im(z)\in(-\infty,0)$, is provided by
	\begin{equation}\label{Fourier-AVaR-fc}
		\widehat{l}_1(z)
		= \int_\R \e^{izx}(-x)^+ \dx
		= -\int_{-\infty}^0 \e^{izx}x \dx
		= -\frac{x}{iz}\e^{izx}\Big|_{-\infty}^0 + \frac{1}{(iz)^2}\e^{izx}\Big|_{-\infty}^0
		= - \frac{1}{z^2},
	\end{equation}
	while for $l_2$ we get the same formula, that is
	\begin{equation}\label{Fourier-AVaR-fp}
		\widehat{l}_2(z) = - \frac{1}{z^2},
	\end{equation}
	where now $z\in\C$ with $\Im(z)\in(0,+\infty)$.
	The corresponding dampened payoff functions are
	\begin{equation}
		l_{1,R_1}(x) = \e^{-R_1 x}(-x)^+\,
		\quad\text{and}\quad
		l_{2,R_2}(x) = \e^{-R_2 x}(x)^+.
	\end{equation}
	Clearly, for $R_1<0$ and $R_2>0$, these functions are bounded and continuous, while from \eqref{Fourier-AVaR-fc} it directly follows that $l_{1,R_1},l_{2,R_2}\in L^1$.
	A direct computation shows that also $l_{1,R_1},l_{2,R_2}\in L^2$.
	Indeed,
	\begin{equation}
		\norm{l_{1,R_1}}^2_{L^2}
		= \int_\R \abs{l_{1,R_1}(x)}^2 \dx
		= \int_{-\infty}^0 \e^{-2R_1 x} x^2 \dx
		= \frac{1}{4R_1^3} <\infty,
	\end{equation}
	while the computation for $l_{2,R_2}$ is completely analogous.
	We can also examine the weak derivatives of $l_{1,R_1},l_{2,R_2}$; we get that
	\begin{equation}
		\partial l_{1,R_1}(x)
		= \begin{cases}
			\e^{-R_1 x}(R_1x-1)&\text{for }x>0 \\
			0&\text{for }x<0
		\end{cases}
	\end{equation}
	from which we can directly deduce that $\partial l_{1,R_1}\in L^2$, while the same is true for $\partial l_{2,R_2}$.
	Thus, $l_{1,R_1}$, $l_{2,R_2}$ belong to the Sobolev space
	\begin{align}
		H^1(\R)=\Set{ g\in L^2 :\partial g \text{ exists and }\partial g\in L^2}
	\end{align}
	and using \citep[Lemma~2.5]{EberleinGlauPapapantoleon08} we can conclude that	$\widehat{l}_{1,R_1},\widehat{l}_{2,R_2}$ are integrable.
	Therefore, $\mathcal{J}_1=(-\infty,0)$ and $\mathcal{J}_2=(0,+\infty)$.

	Finally, since $0\in\mathcal{I}^\circ$, we have that $\mathcal{I}\cap\mathcal{J}_1\neq\emptyset$ and $\mathcal{I}\cap\mathcal{J}_2\neq\emptyset$, hence assumption \ref{hypoIJ} is satisfied.
	The result now follows by substituting \eqref{Fourier-AVaR-fc}--\eqref{Fourier-AVaR-fp} into \eqref{Fourier-CV@R-rep}.
\end{proof}

\subsubsection{Polynomial Loss Function}

Another interesting example is the class of polynomial loss functions.
The polynomial loss function is defined by
\begin{equation}\label{poly-lf}
	l(x) = \frac{\left( \left[1+x\right]^+ \right)^\gamma-1}{\gamma}
\end{equation}
for $\gamma\in\N$, $\gamma>1$.
The case $\gamma=2$ corresponds to the Monotone Mean-Variance, cf. \citep{fabio2012}.
The derivative equals
\begin{equation}
	l^\prime(x) = \left(\left[1+x\right]^+\right)^{\gamma-1},
\end{equation}
and the conjugate function is provided by
\begin{equation}
	l^{\ast}(y)=
	\begin{cases}
		\displaystyle \frac{(1-\gamma)y^{\frac{\gamma}{\gamma-1}}-\gamma y-1}{\gamma}&\text{ if }y\geq 0\\
		+\infty&\text{otherwise}.
	\end{cases}
\end{equation}
In this class of loss functions neither the optimal allocation nor the OCE can be computed explicitly and one has to resort to numerical methods for both.
\begin{proposition}
	Let $X\in L^0$ be a random variable such that $0\in\mathcal{I}^\circ$. 
	Then $X\in\mathcal{X}_l=L^\gamma$ and the optimal allocation is the unique solution of the equation $f(\eta)=0$ where
	\begin{equation}\label{secant-poly}
		f(\eta) = \frac{(\gamma-1)!}{2\pi} \int_\R M_X(iu-R) \frac{\e^{(R-iu)(1+\eta)}}{(R-iu)^\gamma} \du - 1,
	\end{equation}
	with $R\in\mathcal{I}\cap(0,+\infty)$.
	Once $\eta^*$ is determined, the polynomial loss function risk measure admits	the following representation
	\begin{equation}\label{OCE-poly}
		\rho(X) =  \frac{(\gamma-1)!}{2\pi} \int_\R M_X(iu-R) \frac{\e^{(R-iu)(1+\eta^*)}}{(R-iu)^{\gamma+1}} \du - \frac1\gamma - \eta^*.
	\end{equation}
\end{proposition}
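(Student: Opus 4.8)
The plan is to follow the template of Proposition \ref{prop-AV@R}, specializing Theorem \ref{thm-fourier} to the polynomial loss function \eqref{poly-lf}, but now invoking \emph{both} parts of the theorem, since neither $\eta^*$ nor $\rho(X)$ is available in closed form. First I would identify the ambient space: because $l(x)$ grows like $x^\gamma/\gamma$ as $x\to+\infty$ and vanishes for $x\le-1$, the Orlicz condition $E[l(c\abs{X})]<\infty$ for all $c>0$ is equivalent to $E[\abs{X}^\gamma]<\infty$, so $\mathcal{X}_l=L^\gamma$; and the hypothesis $0\in\mathcal{I}^\circ$ gives finite exponential moments of $X$ in a neighbourhood of the origin, hence all polynomial moments are finite and in particular $X\in L^\gamma$.

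For the allocation equation I would apply the first part of Theorem \ref{thm-fourier} to $l'(x)=([1+x]^+)^{\gamma-1}$, which is continuous since $\gamma\ge2$. The damped derivative $l'_R(x)=\e^{-Rx}([1+x]^+)^{\gamma-1}$ is bounded, continuous and integrable for every $R>0$, the exponential controlling the polynomial growth at $+\infty$. Its Fourier transform is computed by the substitution $w=1+x$, which reduces the integral to a Gamma integral; using $\gamma\in\N$ this yields
\[
\widehat{l'}(u+iR)=(\gamma-1)!\,\frac{\e^{R-iu}}{(R-iu)^\gamma}.
\]
The modulus decays like $\abs{u}^{-\gamma}$, which is integrable because $\gamma\ge2$, so $\mathcal{J}'=(0,+\infty)$. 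Substituting this transform into the formula of Theorem \ref{thm-fourier} and absorbing the factor $\e^{R-iu}$ into the exponential produces exactly \eqref{secant-poly}, and the theorem guarantees that $\eta^*$ is its unique root.

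For the risk-measure representation I would apply the second part of the theorem, but first split off the constant: writing $l(x)=\bar l(x)-1/\gamma$ with $\bar l(x)=([1+x]^+)^\gamma/\gamma$, one has $\rho(X)=E[\bar l(\eta^*-X)]-1/\gamma-\eta^*$. The same Gamma-integral computation, now with exponent $\gamma$ and using $\gamma!/\gamma=(\gamma-1)!$, gives
\[
\widehat{\bar l}(u+iR)=(\gamma-1)!\,\frac{\e^{R-iu}}{(R-iu)^{\gamma+1}},
\]
whose modulus decays like $\abs{u}^{-(\gamma+1)}$, so $\mathcal{J}=(0,+\infty)$. Feeding this into Theorem \ref{thm-fourier} yields the integral in \eqref{OCE-poly}, with the constants $-1/\gamma-\eta^*$ appearing as above. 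Since $0\in\mathcal{I}^\circ$ forces $\mathcal{I}$ to meet the positive half-line, both $\mathcal{I}\cap\mathcal{J}'$ and $\mathcal{I}\cap\mathcal{J}$ are nonempty, so assumption \ref{hypoIJ} holds and the applications of the theorem are legitimate.

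The only genuinely delicate point is the non-integrable constant $-1/\gamma$ hidden in $l$: unlike the CV@R pieces $l_1,l_2$, the raw loss function has no classical Fourier transform, so the split $l=\bar l-1/\gamma$ is essential and is precisely what produces the isolated $-1/\gamma$ term in \eqref{OCE-poly}. Everything else is the routine verification of boundedness, continuity and integrability of the damped functions together with the polynomial decay of their transforms; here, because $\gamma\ge2$, the decay is fast enough that integrability of the transforms follows directly, without the $H^1$-Sobolev embedding needed in the CV@R case.
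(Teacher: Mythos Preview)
Your proposal is correct and follows essentially the same route as the paper: apply Theorem~\ref{thm-fourier} to the polynomial loss function, compute the Fourier transforms of $l'$ and of (the nonconstant part of) $l$, and verify condition~\ref{hypoIJ} using $0\in\mathcal{I}^\circ$. The only tactical differences are that the paper obtains the transform of $\varphi(x)=[(a+x)^+]^n$ via iterated integration by parts rather than the Gamma-integral substitution, and it establishes integrability of $\widehat{\varphi_R}$ by invoking the $H^1$-argument from Proposition~\ref{prop-AV@R} rather than your direct $|u|^{-\gamma}$ decay estimate; your explicit isolation of the constant $-1/\gamma$ is also a point the paper leaves implicit.
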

\begin{proof}
	We start by computing the Fourier transform of the following function:
	\begin{equation}
		\varphi(x) = [(a+x)^+]^n
	\end{equation}
	for $a\in\R, n\in\N$.
	Integrating by parts iteratively we get, for $z\in\C$ with $\Im(z)\in(0,+\infty)$, that
	\begin{align}\label{fourier-poly}
		\widehat{\varphi}(z)
		&= \int_\R\e^{izx}\varphi(x)dx
		 = \int_{-a}^{\infty} \e^{izx}(a+x)^ndx \nonumber\\
		&= \underbrace{\frac{\e^{izx}}{iz}(a+x)^n\Big|_{-a}^{\infty}}_{=0}
		 - \frac{n}{iz}\int_{-a}^{\infty} \e^{izx}(a+x)^{n-1}dx \nonumber \, 
		 = \dots \nonumber\\
		&= (-1)^n\frac{n!}{(iz)^n}\int_{-a}^{\infty} \e^{izx}dx
		 = n!\left(\frac{i}{z}\right)^{n+1}.
	\end{align}
	Following the same argumentation as in the proof of Proposition \ref{prop-AV@R},	we can show that the dampened function $\varphi_R$ belongs to $L^1_{\textrm{bc}}$ and has an integrable Fourier transform for $R\in(0,+\infty)$. 

	Clearly $\mathcal{X}_l=L^\gamma$ and $X\in L^\gamma$. Now consider the 
function
	\begin{align}\label{f-polynomial}
		f(\eta) 
		= E\big[l'(\eta-X)\big]-1
		= E\left[\left([1+\eta-X]^+\right)^{\gamma-1}\right]-1.
	\end{align}
	According to \eqref{eq:charopteta}, the zero of this function determines the optimal allocation corresponding to the polynomial loss function	\eqref{poly-lf}, and this can be determined by a standard root-finding algorithm.
	Applying Theorem \ref{thm-fourier} to \eqref{f-polynomial}, using	\eqref{fourier-poly} with $a=1,n=\gamma-1$, and recalling that $\mathcal{I}\cap\mathcal{J}'\neq\emptyset$ since $0\in\mathcal{I}^\circ$ and $\mathcal{J}'=(0,+\infty)$, yields the representation \eqref{secant-poly}.

	Once the optimal allocation has been determined numerically, we just have to combine \eqref{eq:optialloc}, \eqref{poly-lf}, Theorem \ref{thm-fourier}, and \eqref{f-polynomial} with $a=1,n=\gamma$, and direct computations yield representation \eqref{OCE-poly} for the risk measure corresponding to the polynomial loss function.
\end{proof}

\subsection{Scenarios and Computation of Risk Contributions}

The framework we consider is very flexible because on the one hand it accommodates a variety of different loss functions, while on the other hand the only information needed about the underlying risk factor $X$ is its moment generating function.
This is the reason why a variety of different scenarios can be treated simultaneously:
\begin{enumerate}[label=\textit{S\arabic*:}]
	\item The risk factor corresponds to an asset or a portfolio with known moment generating function (e.g. estimated from market data).
	\item The risk factor corresponds to the random claims against an insurer, that is $X=\sum_{i=1}^N X_i$, where $N,X_1,X_2,\dots$ are independent and $N$ takes values in $\mathbb{N}_0$.
		Then it holds
		\begin{equation*}
			M_X(u)=P(N=0)+\sum_{i=1}^\infty P(N=i)\prod_{j=1}^i M_{X_j}(u).
		\end{equation*}
		A weighted portfolio of financial assets, that is, $X=\sum_{i=1}^N w_iX_i$, where $N$ is fixed, $w_i$ is deterministic and $X_i$, $i=1,\dots,N$, are independent, can be treated analogously.
	\item The risk factor describes the total loss of a portfolio in the 
spirit of \citet{DDD}, that is $X=\sum_{i=1}^n Z_i U_i$, where $n$ is a finite 
number of financial positions, $U_i\ge 0$, and $Z_1,U_1,\dots, Z_n, U_n$ are 
independent.
		The random variable $Z_i$ determines whether $i$ defaults $(Z_i=1)$ or not $(Z_i=0)$, and $U_i$ determines the exposure at default.
		In that case
		\begin{equation*}
			M_X(u)=E\left[\exp\left(u\sum_{i=1}^n Z_i U_i\right)\right]
			      =\prod_{i=1}^n\big\{ P(Z_i=0)+P(Z_i=1)M_{U_i}(u) \big\}.
		\end{equation*}
	\item An easy and popular way to generate dependence is using a linear mixture model (cf. e.g. \citet{MadanKhanna09}, and \citet{Kawai09}).
	      Let $Y_1,\dots,Y_m$  be independent random variables, then the dependent factors $U=(U_1,\dots,U_n)$ can be defined via $U=AY$ for $A\in\R^{n\times m}$.
	      Assuming that the moment generating function of the $Y_i$'s is known, the moment generating function of the risk factor $X=\sum_{i=1}^nU_i$ is provided by
	      \begin{equation*}
			M_X(u) = \prod_{l=1}^{m} M_{Y_l}\left(u\alpha_l\right),
	      \end{equation*}
	      where $\alpha_l:=\sum_{i=1}^{n}A_{il}$.
\end{enumerate}

\subsubsection{Risk Contribution}

We present below an example where the risk contribution is computed explicitly using Fourier methods.
Let $X=\sum_{i=1}^n X_i$ be a portfolio, where $X_1,\ldots,X_n$ are independent random variables with continuous joint distribution.
We are interested in the $CV@R_{\lambda}$-risk contribution of the risk factor 
$Y=X_\ell$, $1\le\ell\le n$, to this portfolio, that is computing 
$RC(X;Y)=RC(\sum_{i=1}^n X_i;X_\ell)$ in the case where $l(x)=\gamma_2 x^+$ 
with $\gamma_2=1/\lambda$.

In order to compute this risk contribution, we will make use of the following notation.
Define the random vector
\begin{align*}
	(Z,Y) := \Big(\sum_{\substack{i=1\\ i\neq \ell}}^nX_i,X_\ell\Big)
\end{align*}
and denote its probability measure by $P_{Z,Y}$ and its moment generating function by $M_{Z,Y}$.
Moreover, define the measure $\varrho_R(dx):=\e^{\langle R,x\rangle}P_{Z,Y}(dx)$ and introduce the sets
\begin{align}
	\mathcal{Y} := \Set{R\in\R^2: M_{Z,Y}(R)<\infty},
\end{align}
and
\begin{align}
	\mathcal{Z} := \Set{R\in\R^2: \widehat\varrho_R\in L^1}.
\end{align}

\begin{proposition}\label{contribution-fourier}
	Let $X,X_\ell\in L^1$ and assume that $\mathcal{Y}\cap\mathcal{Z}\neq\emptyset$. 
	Moreover, assume that the optimal allocation $\eta^*$ has been computed by \eqref{opt_alloc_AV@R}.
	Then, the risk contribution admits the following representation:
	\begin{multline}
		RC(X;X_\ell)
		=\frac{\gamma_2}{4 \pi^2}\int_{\R^2}^{}M_{Z,Y}(R+iu)\frac{\e^{-(R_1+iu_1)\eta^\ast}}{(R_1+iu_1)(u_1-u_2-iR_1-iR_2)^2}du\\
		+\frac{\gamma_2}{4 \pi^2}\int_{\R^2}^{}M_{Z,Y}(R^\prime+iu)\frac{\e^{-(R_1^\prime+iu_1)\eta^\ast}}{(R^\prime_1+iu_1)(u_1-u_2-iR^\prime_1-iR^\prime_2)^2}du,
		\label{}
	\end{multline}
	where
	\begin{equation}
		M_{Z,Y}(u_1,u_2)=M_{X_\ell}(u_2)\prod_{\substack{i=1\\ i\neq\ell}}^n M_{X_i}(u_1),
		\label{mgf-contrib-fourier}
	\end{equation}
	for $R,R'\in\mathcal{Y}\cap\mathcal{Z}$ such that $R_1<0$, $R^\prime_1<0$, $R_1+R_2<0$ and $R^\prime_1+R^\prime_2>0$.
\end{proposition}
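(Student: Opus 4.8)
The plan is to reduce the risk contribution to a single two-dimensional expectation and then represent that expectation by the natural multivariate analogue of the transform method used in Proposition \ref{prop-AV@R}. Since $X_1,\dots,X_n$ are independent with continuous joint distribution, the sum $X=\sum_{i}X_i$ has a continuous law, so we are in the second case of Proposition \ref{prop:riskcontrib}: the single kink of $l(x)=\gamma_2 x^+$ at the origin is charged with probability zero by the law of $\eta^*-X$, the left- and right-hand derivatives agree almost surely, and the clean formula applies. With $l'(x)=\gamma_2\mathbf{1}_{\{x>0\}}$ this gives
\[
	RC(X;X_\ell)=-E\bigl[X_\ell\,l'(\eta^*-X)\bigr]=-\gamma_2\,E\bigl[X_\ell\,\mathbf{1}_{\{X<\eta^*\}}\bigr],
\]
while $E[l'(\eta^*-X)]=1$ reads $P(X<\eta^*)=1/\gamma_2=\lambda$, which is exactly consistent with the prescribed $\eta^*$ of \eqref{opt_alloc_AV@R}. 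The assumption $X,X_\ell\in L^1$ guarantees that this expectation is finite.

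Writing $X=Z+Y$ with $(Z,Y)=(\sum_{i\neq\ell}X_i,X_\ell)$, I must evaluate $E[F(Z,Y)]$ for the payoff $F(z,y)=y\,\mathbf{1}_{\{z+y<\eta^*\}}$. For a damping vector $R\in\mathcal{Y}$ set $F_R(x)=\e^{-\langle R,x\rangle}F(x)$; since $\widehat\varrho_R(u)=M_{Z,Y}(R+iu)$, a Parseval identity against the finite measure $\varrho_R$ produces, whenever $F_R$ and $\widehat\varrho_R$ are integrable (that is, $R\in\mathcal{Z}$), a representation of the shape
\[
	E[F(Z,Y)]=\frac{1}{(2\pi)^2}\int_{\R^2} M_{Z,Y}(R+iu)\,\widehat{F_R}(-u)\,\du .
\]
Independence of $Z$ and $Y$ factorizes the transform of the measure as in \eqref{mgf-contrib-fourier}, and this is the only place the product structure of $M_{Z,Y}$ is used.

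The main obstacle is that $F$ carries the unbounded linear factor $y$, so that \emph{no} single $R$ makes $F_R\in L^1(\R^2)$. After the change of variables $s=z+y$ (under which $u_1$ is dual to $s$ and $u_2-u_1$ to $y$), the $s$-integral over $(-\infty,\eta^*)$ converges precisely for $R_1<0$ and produces the simple pole $1/(R_1+iu_1)$ together with the factor $\e^{-(R_1+iu_1)\eta^*}$, but the remaining integral $\int_\R y\,\e^{cy}\,\dx$ diverges for every admissible $c$. I therefore decompose $F$ according to the sign of $y$, namely $F=Y^+\mathbf{1}_{\{z+y<\eta^*\}}-Y^-\mathbf{1}_{\{z+y<\eta^*\}}$; the two half-line integrals $\int_0^\infty$ and $\int_{-\infty}^0$ then converge on \emph{complementary} damping regions and each yields a double pole $1/(u_1-u_2-i(R_1+R_2))^2$, up to sign. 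This is precisely the mechanism producing the two integrals in the statement and the two complementary conditions $R_1+R_2<0$ and $R'_1+R'_2>0$, in complete analogy with the appearance of $(\eta^*-X)^+$ and $(X-\eta^*)^+$ in Proposition \ref{prop-AV@R}; keeping track of the exact exponents is what pins down these regimes.

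Each dampened piece is a product of a Heaviside-type factor in $z$ and a ramp factor in $y$, hence bounded, continuous and integrable with a Fourier transform in $L^1$: the $H^1$-Sobolev argument of Proposition \ref{prop-AV@R} applies essentially verbatim. This legitimizes the termwise transform representation exactly when $\mathcal{Y}\cap\mathcal{Z}\neq\emptyset$ with $R,R'$ chosen in the respective regions, as assumed. Substituting the two transforms together with the factored $M_{Z,Y}$ and collecting the constant $\gamma_2/(4\pi^2)$ (from the prefactor $-\gamma_2$ and the normalization $1/(2\pi)^2$) yields the claimed formula. I expect the splitting argument to be the crux; the integrability verification and the sign bookkeeping are routine but must be carried out carefully to recover the correct damping regions.
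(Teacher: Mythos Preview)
Your proposal is correct and follows essentially the same route as the paper: apply Proposition~\ref{prop:riskcontrib} to obtain $RC(X;X_\ell)=-\gamma_2 E[Y\,1_{\{\eta^*>Z+Y\}}]$, split the integrand according to the sign of $Y$ into two pieces $\psi_1,\psi_2$ (precisely because no single damping vector makes the full payoff integrable), apply the two-dimensional Parseval/Fourier representation of \cite[Theorem~3.2]{EberleinGlauPapapantoleon08} to each piece, and factorize $M_{Z,Y}$ by independence. The paper carries out the explicit computation of $\widehat\psi_1,\widehat\psi_2$ that you describe in words, arriving at the same damping constraints and the same final formula.
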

\begin{proof}
	Using Proposition \ref{prop:riskcontrib} with $l(x)=\gamma_2 x^+$, it follows directly that
	\begin{align*}
		RC(X;X_\ell)
		&=-\gamma_2E\left[ Y1_{\set{\eta^\ast>X}} \right]\\
		&=-\gamma_2E\left[ Y1_{\set{Y\leq 0}}1_{\set{\eta^\ast> Z+Y}}+Y1_{\set{Y>0}}1_{\set{\eta^\ast > Z+Y}} \right]\\
		&=-\gamma_2 E\big[ \psi_1(Z,Y) + \psi_2(Z,Y) \big]\\
		&=-\frac{\gamma_2}{4\pi^2}\left(  \;\int_{\R^2}^{}\!M_{Z,Y}(R+iu)\widehat{\psi}_1(iR-u)du \!+\! \int_{\R^2}^{}\!M_{Z,Y}(R^\prime+iu)\widehat{\psi}_2(iR^\prime-u)du \right)
	\end{align*}
	where $\psi_1(z,y):=y1_{\set{y\leq 0}}1_{\set{\eta^\ast> z+y}}$ and $\psi_2(z,y):=y1_{\set{y> 0}}1_{\set{\eta^\ast> z+y}}$.
	The last equality follows from \citep[Theorem 3.2]{EberleinGlauPapapantoleon08}, noting that for $R,R'\in\mathcal{Y}\cap\mathcal{Z}$ assumptions (A2) and (A3) therein are satisfied.
	
	Now, by independence we get immediately, for $u\in\mathcal{Y}$, that
	\begin{equation*}
		M_{Z,Y}(u)=E\left[ \e^{u_1Z+u_2Y} \right]=E\left[ \e^{u_1 \sum_{i=1,i\neq \ell}^n X_i+u_2X_\ell}\right]=M_{X_\ell}(u_2)\prod_{\substack{i=1\\ i\neq \ell}}^n M_{X_i}(u_1).
	\end{equation*}
	Next, we have to compute the Fourier transforms of the functions $\psi_1$ and $\psi_2$.
	We have, for $u\in\C$ with $\Im(u_1)<0$ and $\Im(u_2-u_1)<0$,
	\begin{align*}
		\widehat{\psi}_1(u)
		&=\int_{\R^2}^{}\e^{iu_1z+iu_2y}\psi_1(z,y)dzdy
		 =\int_{-\infty}^{0}\int_{-\infty}^{\eta^\ast-y}\e^{iu_1 z+iu_2y}y dzdy\\
		&=\frac{\e^{iu_1\eta^*}}{iu_1}\int_{-\infty}^{0}\e^{i(u_2-u_1)y}ydy
		 =\frac{\e^{iu_1\eta^*}}{iu_1(u_2-u_1)^2}.
	\end{align*}
	Similarly, for $u\in\C$ with $\Im(u_1)<0$ and $\Im(u_2-u_1)>0$, we get that the Fourier transform of $\psi_2$ equals
	\begin{align*}
		\widehat{\psi}_2(u)
		 =\frac{\e^{iu_1\eta^*}}{iu_1(u_2-u_1)^2},
	\end{align*}
	while we can easily observe that assumption (A1) from \citep[Theorem 3.2]{EberleinGlauPapapantoleon08} is also satisfied.
	Finally, the proof is completed by putting the pieces together.
\end{proof}

\begin{remark}
While the portfolio $X$ in this example contains $n$ variables, we can compute the risk contribution using only a 2-dimensional numerical integration, since only two variables are important: $Z$ and $Y$.
The same is true if we are interested in the contribution of a subportfolio $Y=\sum_{i=1}^mX_i$, $m<n$, to the total portfolio $X$.
On the contrary, the Monte Carlo computation of the risk contribution would require the simulation of all $n$ variables and thus is significantly more time consuming.
\end{remark}

\begin{remark}
Consider the scenario \textit{S4} with dependent risks, and assume we want to compute the contribution of a risk factor $U_\ell$ to the total portfolio $X=\sum_{i=1}^nU_i$. 
Then, we can apply Proposition \ref{contribution-fourier} directly, by just replacing the moment generating function in \eqref{mgf-contrib-fourier} with
\begin{equation}
M_{Z,Y}(u)=\prod_{k=1}^m M_{Y_k}\left(u_1\beta_k+u_2A_{\ell k}\right),
\end{equation}
where $\beta_k=\sum_{i=1,i\neq\ell}^nA_{ik}$.
\end{remark}

\section{Numerical Analysis and Examples}
\label{sec:numerics}

The aim of this section is to analyze and test the numerical methods for the computation of risk measures developed in the previous section.
We start by considering scenario \textit{S1} and assuming that the risk factor $X$ has a known distribution and moment generating function.
We consider the normal inverse Gaussian (NIG) distribution, which is very flexible and exhibits a variety of behaviors ranging from fat-tails to high peaks.
This distribution has been extensively studied as a model for financial markets both under the real-world and under the risk-neutral measure; see e.g. \citet{EberleinPrause02}, \citet{Barndorff-NielsenPrause01}, and \citet{Schoutens03}.
The NIG distribution has four parameters and the parameter space is $\alpha>0$, $0\leq|\beta|<\alpha$, $\delta>0$ and $\mu\in\mathbb{R}$.
The moment generating function of the NIG distribution has the following form
\begin{align}
	M_X(u) = \exp\left(u\mu 
	+ \delta\big[\sqrt{\alpha^2-\beta^2}-\sqrt{\alpha^2-(\beta+u)^2}\big]\right).
\end{align}
It is well-defined for $u\in(-\alpha-\beta,\alpha-\beta)=:\mathcal{I}$ and $0\in\mathcal{I}^\circ$. 
The density and other quantities of interest, e.g. mean and variance, can be found in \citet{Eberlein01a} or \citet{BarndorffNielsen98}.
The parameters have roughly the following impact on the shape of the density:
\begin{itemize}
	\item $\alpha$ is a shape parameter and determines the heaviness of the tails and the height of the peak;
	\item $\beta$ is a skewness parameter;
	\item $\delta$ is a scaling parameter and determines the variance;
	\item $\mu$ is a location parameter.
\end{itemize}
See Figure \ref{NIG_test} for a graphical illustration of the impact of the parameters $\alpha,\beta$ and $\delta$ on the shape of the density. 
\begin{figure}[ht!]
	\centering
	\includegraphics[width=4.8cm]{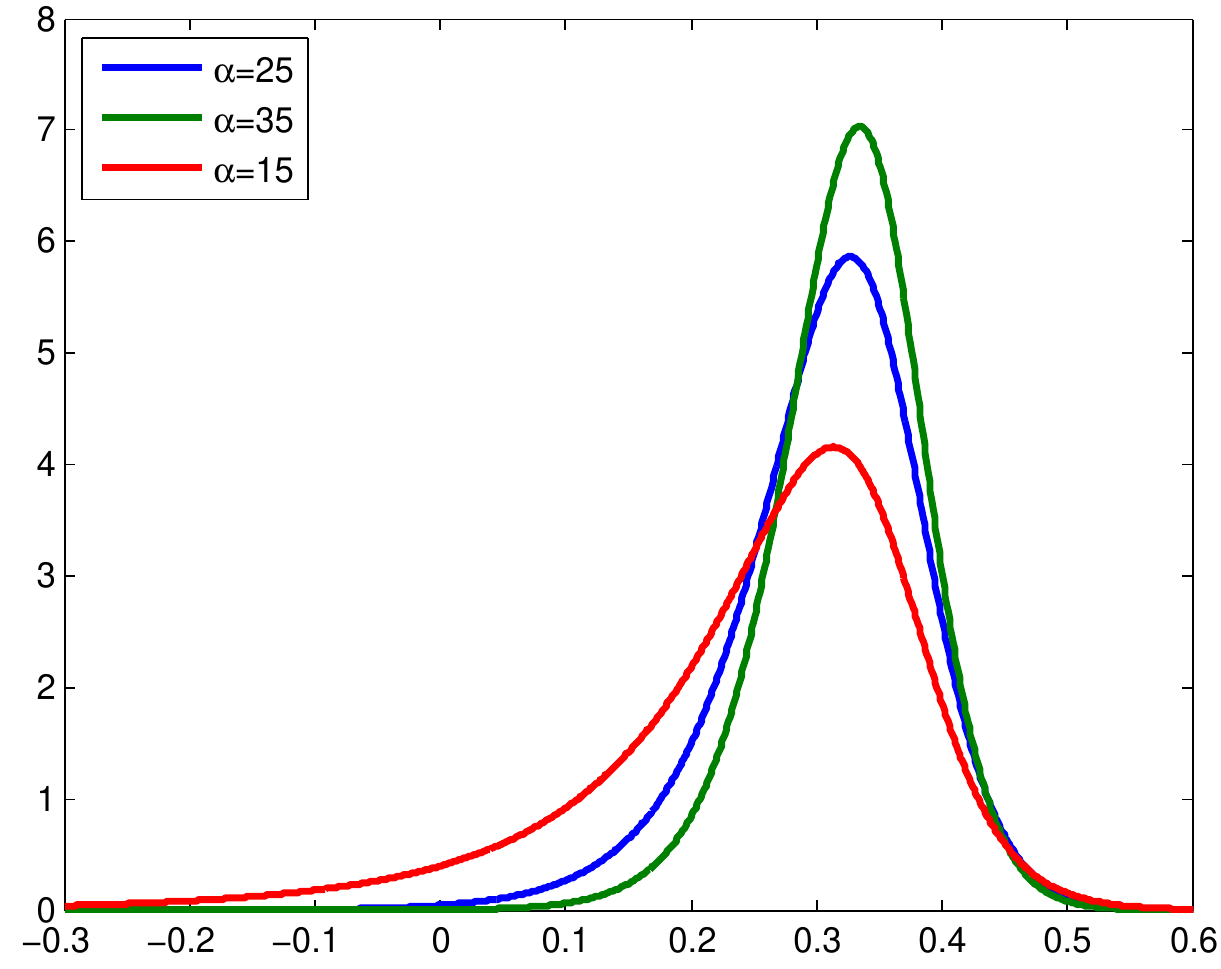}
	\includegraphics[width=4.8cm]{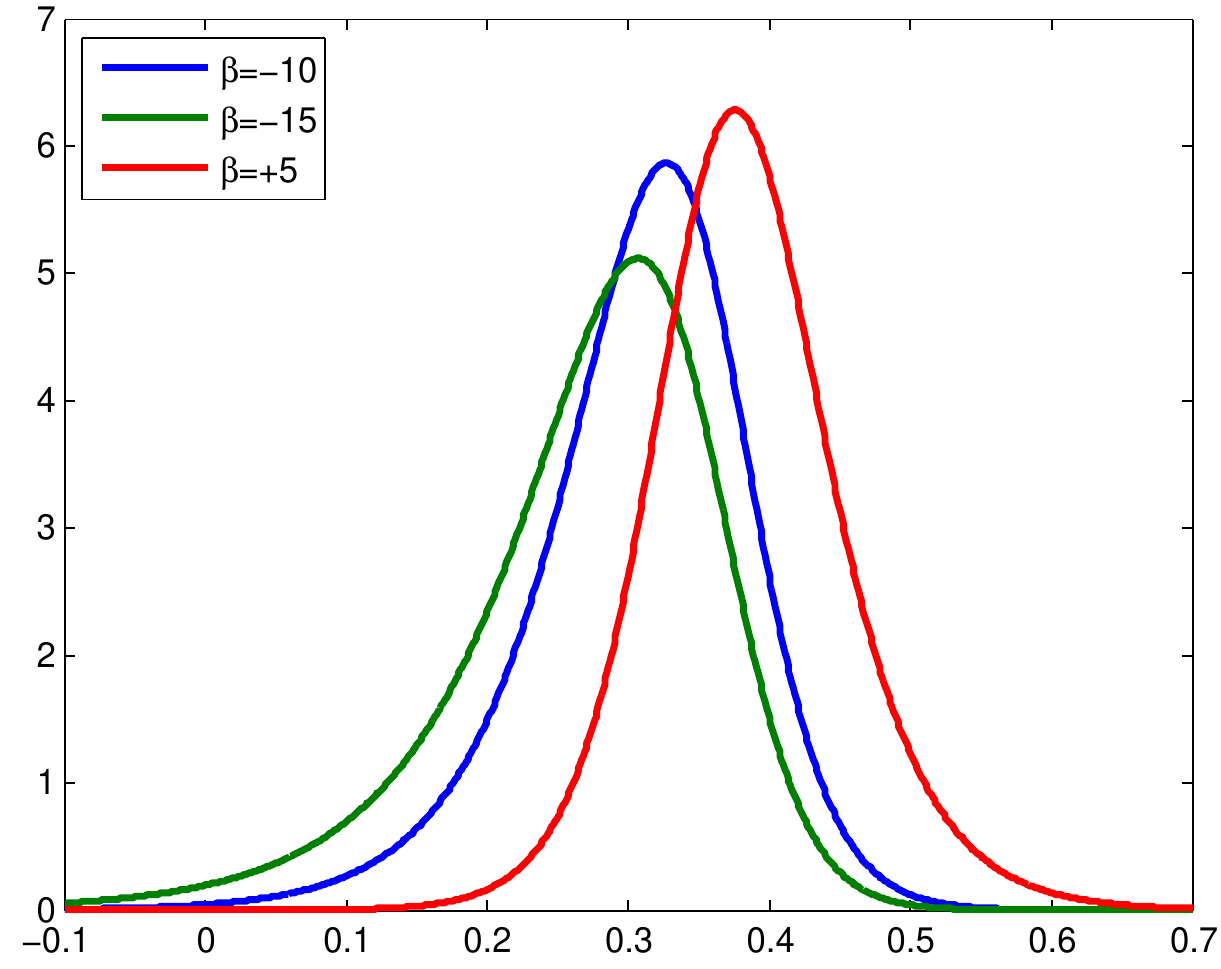} 
	\includegraphics[width=4.8cm]{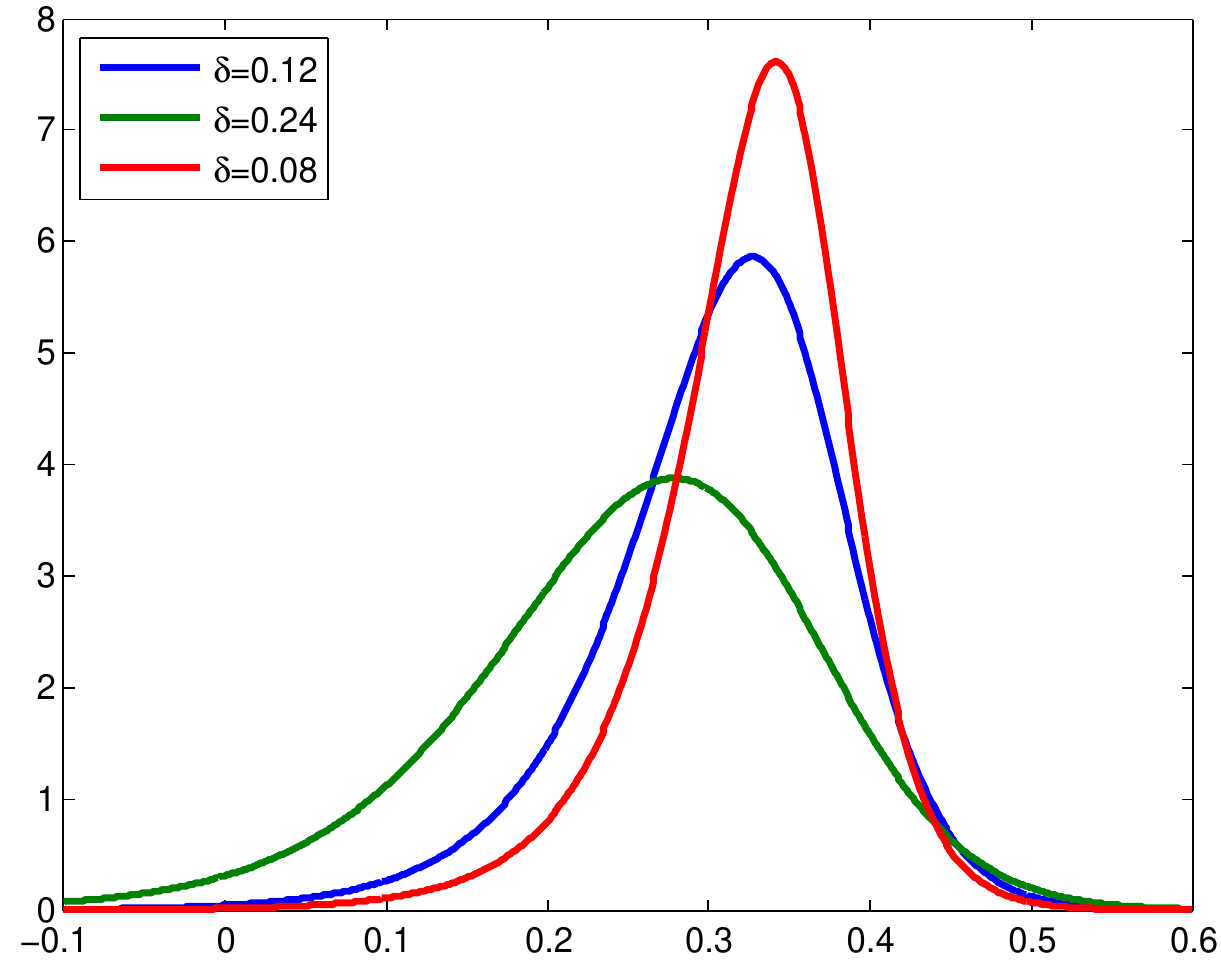}
	\caption{NIG densities with varying $\alpha,\beta$ and $\delta$.}
	\label{NIG_test}
\end{figure}

In order to make the numerical examples realistic we consider parameter sets for the NIG distribution stemming from real data.
The four different sets we consider are summarized in Table \ref{NIG_sets}, and correspond to parameters estimated from daily and monthly returns, and from options data; cf. \citep{EberleinPrause02,Schoutens03}.
Only the last set of parameters is artificial, and corresponds to a random variable with heavy tails, zero mean and variance one.
These parameters exhibit a smooth transition from densities with high peaks to densities with fat tails, and serve to test the numerical methods in a variety of different situations.
We have set $\mu=0$ in all cases, since this is completely irrelevant for the computation of risk measures.
\begin{table}[h!]
	\begin{center}
		\begin{tabular}{@{}lrrr@{}}
			\toprule
			&\multicolumn{3}{c}{Parameters}\\
			\cmidrule{2-4}
			& \multicolumn{1}{c}{$\alpha$} & \multicolumn{1}{c}{$\beta$} & \multicolumn{1}{c}{$\delta$} \\
			 \midrule
			$\text{NIG}_1$ & 106.00        &    -26.00    &    0.0110\\ 
			$\text{NIG}_2$ &  26.00        &    -10.60    &    0.0070\\ 
			$\text{NIG}_3$ &   6.20        &     -3.90    &    0.0011\\ 
			$\text{NIG}_4$ &   1.00        &      0.00    &    1.0000\\ 
			\bottomrule
		\end{tabular} 
		\caption{Parameters sets for NIG distributions.}
		\label{NIG_sets}
	\end{center}
\end{table}

\subsection{CV@R}
We want to compare here the Fourier representation \eqref{CV@R-fourier} for the Conditional Value at Risk with the standard representation \eqref{CV@R-std}.
A careful observation of these two formulas reveals that the Fourier representation should be numerically more efficient than the standard one.
Indeed, while the latter requires to solve an optimization problem---the computation of the quantile $q^+_X$---for every grid point used in the numerical integration, the former requires to solve \emph{only one} optimization problem for the computation of $\eta^*$.
Let us assume that the grid for the numerical integration has size $N$, the computational effort for the solution of the optimization problem is $M_O$, while the computational effort for the numerical integration is $M_I$, where typically $M_I\ll M_O$.
Then, the total computational effort (TCE) for the two methods compares as follows:
\begin{align}
	\mathrm{TCE(Fourier)} \cong M_O+M_I 
	\quad \text{vs} \quad
	\mathrm{TCE(standard)} \cong N\cdot M_O+M_I.
\end{align}
This also reveals that the computation of CV@R should \textit{not} be 
significantly more time consuming than the computation of V@R when the Fourier 
representation is used.
Indeed, the bulk of the computation amounts to the solution of the optimization problem (for the quantile or V@R) and not to the numerical integration.

We have computed CV@R using the Fourier and the standard representation for the four parameters sets described in Table \ref{NIG_sets}, at the $\lambda=5\%$ and the $\lambda=1\%$ level. 
The results are reported in Tables \ref{NIG_CVaR_5} and \ref{NIG_CVaR_1} respectively. 
We have also computed V@R for the same levels.
The implementation was done in Matlab and for the computation of the quantile we have used an existing package for the NIG distribution, while the results have been verified with Python and R.
The tables contain the values of V@R and CV@R, the computational time for V@R (CT), and the computational times for CV@R with the Fourier (CT(F)) and the standard representation (CT(S)).

The numerical results are completely in accordance with the analysis above.
Indeed, we can immediately observe that the computational times for CV@R using the standard representation are significantly longer than the corresponding times for the Fourier alternative.
The factor of this difference is at least equal to two, while it equals seven for the third set at the 5\% level.
In addition, we can also observe that the computational times for CV@R using the Fourier method are only marginally longer than the respective times for the computation of V@R.
This value is typically a few thousandths of a second.
This last observation should be an argument in favor of using CV@R for practical applications.

\begin{table}[h!]
	\begin{center}
		\begin{tabular}{@{}lrrcrrr@{}}
			\toprule
			&\multicolumn{2}{c}{V@R}&&\multicolumn{3}{c}{CV@R}\\
			\cmidrule{2-3}\cmidrule{5-7}
			&\multicolumn{1}{c}{Value} & \multicolumn{1}{c}{CT} &&   \multicolumn{1}{c}{Value} & \multicolumn{1}{c}{CT (F)} & \multicolumn{1}{c}{CT (S)} \\ 
			\midrule
			$\text{NIG}_1$ & 0.0210 & 0.092 && 0.0298 &   0.099         & 0.212\\ 
			$\text{NIG}_2$ & 0.0311 & 0.087 && 0.0585 &   0.094         & 0.359\\ 
			$\text{NIG}_3$ & 0.0073 & 0.088 && 0.0352 &   0.097         & 0.636\\ 
			$\text{NIG}_4$ & 1.5914 & 0.089 && 2.2872 &   0.097         & 0.197\\ 
			\bottomrule
		\end{tabular} \\
		\caption{Numerical results for V@R and CV@R at the 5\% level. Time in seconds.}
		\label{NIG_CVaR_5}
	\end{center}
\end{table}

\begin{table}[h!]
	\begin{center}
		\begin{tabular}{@{}lrrcrrr@{}}
			\toprule
			&\multicolumn{2}{c}{V@R}&&\multicolumn{3}{c}{CV@R}\\
			\cmidrule{2-3}\cmidrule{5-7}
			&\multicolumn{1}{c}{Value} & \multicolumn{1}{c}{CT} &&   \multicolumn{1}{c}{Value} & \multicolumn{1}{c}{CT (F)} & \multicolumn{1}{c}{CT (S)} \\ 
			\midrule
			$\text{NIG}_1$ & 0.0350 & 0.095 && 0.0444 &           0.104 & 0.211\\ 
			$\text{NIG}_2$ & 0.0737 & 0.092 && 0.1108 &           0.099 & 0.360\\ 
			$\text{NIG}_3$ & 0.0369 & 0.088 && 0.1162 &           0.100 & 0.507\\ 
			$\text{NIG}_4$ & 2.7019 & 0.094 && 3.4503 &           0.099 & 0.194\\ 
			\bottomrule
		\end{tabular} \\
		\caption{Numerical results for V@R and CV@R at the 1\% level. Time in seconds.}
		\label{NIG_CVaR_1}
	\end{center}
\end{table}

\begin{remark}
In case the risk factor $X$ has a known density function (scenario \textit{S1}),
as is the case for the normal inverse Gaussian distribution, we can directly
integrate over the density to compute CV@R. We have the following
representation 
\begin{align}\label{CV@R-density}
CV@R_\lambda(X) = \frac{1}{\lambda} \int_\R (\eta^*-x)f_X(x)dx - \eta^*,
\end{align}
where $f_X$ denotes the density of the random variable $X$. We have tested this
method numerically and, while it yields very competitive---in terms of
computational times---results for the third and fourth datasets, it fails
completely for the first and second datasets. The reason is that these data
correspond to densities with very high peaks and small variance, and the
standard discretization in Matlab is not sufficient to deliver the correct
values. Since these datasets correspond to 1-day and 1-month returns, while in
practice risk measures for 10-days returns have to be computed, one should be
very careful when using \eqref{CV@R-density}.
\end{remark}

\subsection{Polynomial Risk Measures}

In the last numerical experiment we compute polynomial risk measures using the Fourier methodology developed here.
We consider again scenario \textit{S1} and use the parameters for the normal inverse Gaussian distribution from Table \ref{NIG_sets}.
We consider three exponents for the relative risk aversion parameter: $\gamma=2$ which corresponds to monotone mean-variance, $\gamma=4$ which corresponds to quartic utility (cf. \citet{Hamm_Salfeld_Weber_2012}) and $\gamma=5$.
We have first computed the optimal allocation using representation \eqref{secant-poly} in combination with Brent's root finding algorithm and then calculated the corresponding risk measure using \eqref{OCE-poly}.
The values of both $\eta^*$ and $\rho(X)$ for all datasets and exponents are reported in Tables \ref{Table_poly_2} and \ref{Table_poly_4-5} together with the respective computational times for the Fourier representation (CT(F)).
  
\begin{table}[h!]
	\begin{center}
		\begin{tabular}{@{}lrrcrrr@{}}
			\toprule
			&\multicolumn{3}{c}{Fourier}&&\multicolumn{2}{c}{SRF}\\
			\cmidrule{2-4}\cmidrule{6-7}
			&\multicolumn{1}{c}{$\eta^*$} & \multicolumn{1}{c}{$\rho(X)$} & \multicolumn{1}{c}{CT(F)} && \multicolumn{2}{c}{CT} \\ 
			\midrule
			$\text{NIG}_1$ & -0.0028 &  0.0028 & 0.045 &  & 0.455 &\\ 
			$\text{NIG}_2$ & -0.0031 &  0.0033 & 0.051 &  & 0.449 &\\ 
			$\text{NIG}_3$ & -0.0009 &  0.0011 & 0.102 &  & 0.443 &\\ 
			$\text{NIG}_4$ & -0.0957 &  0.4380 & 0.044 &  & 0.448 &\\ 
			\bottomrule
		\end{tabular} \\
		\caption{Polynomial risk measure with $\gamma=2$. Time in seconds.}
		\label{Table_poly_2}
	\end{center}
\end{table} 

\begin{table}[h!]
	\begin{center}
		\begin{tabular}{@{}lrrrcrrrr@{}}
			\toprule
			&\multicolumn{3}{c}{$\gamma=4$}&&\multicolumn{3}{c}{$\gamma=5$}\\
			\cmidrule{2-4}\cmidrule{6-8}
			&\multicolumn{1}{c}{$\eta^*$} & \multicolumn{1}{c}{$\rho(X)$} & \multicolumn{1}{c}{CT(F)} &&\multicolumn{1}{c}{$\eta^*$} & \multicolumn{1}{c}{$\rho(X)$} & \multicolumn{1}{c}{CT(F)} \\ 
			\midrule
			$\text{NIG}_1$ & -0.0029 &  0.0030 & 0.028 && -0.0030 &  0.0031 & 0.026 &\\ 
			$\text{NIG}_2$ & -0.0035 &  0.0037 & 0.026 && -0.0037 &  0.0039 & 0.025 &\\ 
			$\text{NIG}_3$ & -0.0013 &  0.0017 & 0.029 && -0.0017 &  0.0023 & 0.024 &\\ 
			$\text{NIG}_4$ & -1.0283 &  1.4994 & 0.074 && -1.8095 &  2.3915 & 0.068 &\\ 
			\bottomrule
		\end{tabular} \\
		\caption{Polynomial risk measures with $\gamma=4$ and $\gamma=5$. Time in seconds.}
		\label{Table_poly_4-5}
	\end{center}
\end{table} 

We can immediately observe that the combination of a deterministic root-finding
algorithm with the Fourier representation for the optimal allocation and risk
measure yields numerical results in very short time for all combinations of
parameters and exponents. In general, less than 1/10 of a second is required to
solve the optimization problem corresponding to the allocation $\eta^*$ and then
to compute the risk measure. One can also observe that computational times are
decreasing as the relative risk aversion parameter $\gamma$ increases.

In order to compare our results, we have used a stochastic root finding (SRF)
algorithm, see
\citep{Bardou_Frikha_Pages_2009,Dunkel_Weber_2010,Hamm_Salfeld_Weber_2012}. We
use 30,000 iteration steps as suggested by the results in
\cite{Hamm_Salfeld_Weber_2012}, although we have not implemented a variance
reduction technique. Note that, for a fixed number of steps, implementation of 
a variance reduction technique would increase the computational time. The 
computational times for the stochastic root finding methods in all datasets for 
$\gamma=2$ are reported in the last column of Table \ref{Table_poly_2}. The 
times for the other exponents are almost identical, thus have been omitted for 
the sake of brevity. One can immediately observe that the combination of 
deterministic root finding methods with Fourier representation is several times 
faster than the stochastic root finding schemes. In the worst case, the factor 
equals 4, while in most cases it exceeds 10. Apart from the gains in 
computational time, it should be stressed that the Fourier method yields an 
exact value for both $\eta^*$ and $\rho(X)$, while the stochastic root finding 
scheme delivers only an estimate. 

\bibliographystyle{abbrvnat}
\bibliography{bibliography}
\end{document}